\documentclass[%
 reprint,
superscriptaddress,
 amsmath,amssymb,
 aps,
 pra,
]{revtex4-2}

\usepackage[english]{babel}

\usepackage[utf8]{inputenc}

\usepackage{graphicx}

\usepackage{multirow}

\usepackage{xcolor}

\usepackage{bm}

\usepackage{amsthm}

\newcommand{\tr}{\text{Tr}}

\newtheorem{theorem}{Theorem}
\newtheorem*{remark}{Remark}

\newtheorem{thm}{Theorem}[section]
\newtheorem{lemma}[thm]{Lemma}
\newtheorem{corollary}[thm]{Corollary}

\begin{document}

\title{Flatness-Preserving Operations}
\author{O. Veltheim}
\email{otto.veltheim@helsinki.fi}
\affiliation{Department of Physics,  P.O.Box 64, FIN-00014 University of Helsinki, Finland}
\author{E. Keski-Vakkuri}
\affiliation{Department of Physics,  P.O.Box 64, FIN-00014 University of Helsinki, Finland}
\affiliation{InstituteQ, the Finnish Quantum Institute, Helsinki, Finland}
\affiliation{Helsinki Institute of Physics, P.O.Box 64, FIN-00014 University of Helsinki, Finland}

\begin{abstract}
A quantum state is called flat if it is proportional to a projector. There has been recent
interest in studying antiflatness, the property of diverging from flat states, and establishing
a resource theory for it. Identifying the free operations (the Flatness-Preserving Operations (FPOs)) remained an open problem. 
For this purpose, we first discuss Orthogonality-Preserving Operations (OPOs), of which trivial examples are unitary operations in an isolated system. More generally, we give a simple proof that all OPOs are isometric embeddings consisting of combinations of unitaries/isometries and appending a fixed state. 
 We then show that all FPOs are either constant maps to some fixed flat state or a special case of an OPO, where the appended fixed state must be a flat state. We also show that the only possible flat convex combinations of flat states are those of orthogonal flat states with weights given by their purities. 
\end{abstract}

\maketitle

\section{Introduction}

\textit{Antiflatness} of quantum states is a relatively novel concept in quantum information theory. A state is called \textit{flat} if it has a flat spectrum, meaning that all non-zero eigenvalues are equal. In a bipartite system $AB$, a state is called \textit{flat$_A$} if its reduced state $\rho_A$ is flat. In this way, flatness can be seen as a property of the entanglement spectrum. The complement of flat states are the antiflat states. Since these concepts can be used to refer to properties of the entanglement spectrum, they characterize features of the entanglement present in the global state. On the other hand, antiflatness also connects features of the entanglement spectrum to non-stabilizerness: entangled stabilizer states of a bipartite qubit system AB are a subset of flat$_A$ states \cite{PhysRevA.109.L040401}. Hence, antiflat$_A$ entanglement spectrum is a signature of non-stabilizer states. It was also found to be a characteristic feature of physical states in quantum field theories \cite{Benedetti:2026mfy}.

Flat and antiflat states also play a role in quantum gravity. In holography, so-called fixed-area states \footnote{Semiclassical states projected to a subspace constrained to imply a fixed area for the Ryu-Takayanagi or Hubeny-Rangamani-Takanayagi surface.} have a flat spectrum to leading order in $G_N$ \cite{Dong:2018seb}. This connects antiflat spectrum to quantum gravitational fluctuations and backreaction. Another observation is that antiflatness lower-bounds the amount of non-local magic, which was also found to be connected to gravitational backreaction in \cite{Cao:2024nrx}.

Since antiflatness is emerging as an interesting quantum informational resource, a resource theory approach for it was developed in \cite{jasser2026journeyflatlanddoesantiflatness}. (A more complete review and list of references of antiflatness and related aspects can be found therein.) There, flat states were identified as resource-free states, flatness-preserving operations (FPOs) were defined as free operations, various antiflatness monotones were reviewed, antiflatness-majorization was defined as a preorder, connected to monotone inequalities and FPOs in an analogous manner as Nielsen's theorem \cite{PhysRevLett.83.436}
connects the more familiar majorization to LOCCs. However, examples of FPOs and in particular their classification remained an open problem. 

This work provides a missing piece by identifying and classifying FPOs for finite-dimensional bipartite systems. We first review what are orthogonality-preserving operations (needed for proofs), then list five theorems that are our main results, identifying the FPOs. Proofs are found in Appendices.

\section{Orthogonality-preserving operations}

We say that two quantum states $\rho,\sigma\in\mathcal{H}_1$ are orthogonal if they are orthogonal with respect to the Hilbert-Schmidt inner product:
\begin{equation}
\tr(\rho\sigma)=0.
\end{equation}
Another way of defining orthogonality would be that the two states should have orthogonal support, i.e., $\rho\sigma=0$. These two definitions are equivalent in the case of quantum states, but not for more general linear operators. Since we are only interested in the orthogonality of quantum states, either definition works in our case .

Any quantum operations or quantum channels discussed in this work are assumed to be completely positive and trace-preserving (CPTP). We will also assume that the dimensions of any Hilbert spaces are finite.

We will call a CPTP quantum channel $\mathcal{E:D(H}_1)\rightarrow\mathcal{D(H}_2)$ an Orthogonality-Preserving Operation (OPO), if
\begin{equation}
\rho\sigma=0\Rightarrow \mathcal{E}(\rho)\mathcal{E}(\sigma)=0\quad \forall\rho,\sigma\in\mathcal{D(H}_1).
\end{equation}
Intuitively, one might already suspect that the only quantum channels which preserve orthogonality are unitary maps. We show that this is indeed the case when the codomain of the quantum channel is the same as the domain. For the more general case, where the codomain can be different, we show that the image of the quantum channel must be isomorphic to appending the domain with a fixed quantum state. It follows that the dimension of the codomain is lower-bounded by the dimension of the domain for OPOs.

The results we prove here for OPOs could possibly also be derived from more general results for $C^*$-algebras \cite{winter2009completelypositivemapsorder}. However, we will provide here simpler proofs for quantum channels, in order to use these results also for the Flatness-Preserving Operations in the next section.

We start by showing that for an OPO, the state of the environment after the channel does not depend on the initial state of the system.

\begin{theorem}
\label{th:OPO}
Let $\mathcal{E:D(H}_1)\rightarrow\mathcal{D(H}_2)$ be an OPO. Let $\mathcal{H}_e$ be an ancillary Hilbert space and $W:\mathcal{H}_1\rightarrow\mathcal{H}_2\otimes\mathcal{H}_e$ an isometry such that
\begin{equation}
\mathcal{E}(\rho)=\tr_{\mathcal{H}_e}(W\rho W^\dagger).
\end{equation}
Then, there exists a state $\sigma\in \mathcal{D(H}_e)$ such that
\begin{equation}
\tr_{\mathcal{H}_2}(W\rho W^\dagger)=\sigma\quad\forall\rho\in\mathcal{D(H}_1).
\end{equation}
\end{theorem}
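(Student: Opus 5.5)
The plan is to reduce the statement to pure states and then use orthogonality preservation in the orthonormal basis and in a rotated basis. Fix an orthonormal basis $\{|i\rangle\}$ of $\mathcal{H}_1$ and write $W|i\rangle=\sum_k |a_{ik}\rangle\otimes|k\rangle$ in a fixed basis $\{|k\rangle\}$ of $\mathcal{H}_e$. Define the environment operators $\sigma_{ij}=\tr_{\mathcal{H}_2}(W|i\rangle\langle j|W^\dagger)$ and the system operators $\mathcal{E}(|i\rangle\langle j|)=\tr_{\mathcal{H}_e}(W|i\rangle\langle j|W^\dagger)$. By linearity it suffices to show two things:
\begin{equation}
\sigma_{ij}=0 \quad (i\neq j), \qquad \sigma_{ii}=\sigma \quad \forall i .
\end{equation}
Then $\tr_{\mathcal{H}_2}(W\rho W^\dagger)=\sum_{ij}\rho_{ij}\sigma_{ij}=\tr(\rho)\,\sigma=\sigma$ for every $\rho$.

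\textbf{Step 1 (off-diagonal blocks).} For $i\neq j$, the states $|i\rangle\langle i|$ and $|j\rangle\langle j|$ are orthogonal, so $\mathcal{E}(|i\rangle\langle i|)$ and $\mathcal{E}(|j\rangle\langle j|)$ have orthogonal supports $P_i$ and $P_j$ in $\mathcal{H}_2$. The vector $W|i\rangle$ lies in $P_i\mathcal{H}_2\otimes\mathcal{H}_e$, because its reduced state on $\mathcal{H}_2$ is supported on $P_i$. Likewise $W|j\rangle$ lies in $P_j\mathcal{H}_2\otimes\mathcal{H}_e$. Therefore
\begin{equation}
\sigma_{ij}=\tr_{\mathcal{H}_2}\!\big((P_i\otimes 1)W|i\rangle\langle j|W^\dagger(P_j\otimes 1)\big)=0,
\end{equation}
since $P_jP_i=0$ under the partial trace over $\mathcal{H}_2$.

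\textbf{Step 2 (equal diagonal blocks).} Take $|\pm\rangle=(|i\rangle\pm|j\rangle)/\sqrt2$, which are orthogonal. Repeating the same argument with this basis gives $\tr_{\mathcal{H}_2}(W|+\rangle\langle -|W^\dagger)=0$. Expanding, this operator equals $\tfrac12(\sigma_{ii}-\sigma_{ij}+\sigma_{ji}-\sigma_{jj})$. Step 1 kills the cross terms, so $\sigma_{ii}=\sigma_{jj}$. Call the common value $\sigma$. It is a state because $\sigma=\sigma_{ii}$ is the reduced state of the pure state $W|i\rangle$.

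\textbf{Main obstacle.} The delicate point is the support argument in Step 1. One has to justify that if $\tr_{\mathcal{H}_e}|\psi\rangle\langle\psi|$ is supported on $P$, then $(P\otimes 1)|\psi\rangle=|\psi\rangle$. This follows from
\begin{equation}
\|((1-P)\otimes 1)|\psi\rangle\|^2=\tr\big((1-P)\,\tr_{\mathcal{H}_e}|\psi\rangle\langle\psi|\big)=0.
\end{equation}
The OPO definition is applied only to the orthogonal pure inputs $|i\rangle,|j\rangle$ and $|\pm\rangle$, so nothing beyond the stated hypothesis is needed. I expect the remaining computations to be routine.
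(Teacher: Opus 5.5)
Your proof is correct. It shares the paper's skeleton: apply orthogonality preservation to a basis pair $|i\rangle,|j\rangle$ and to the rotated pair $|\pm\rangle$. The key step, however, is different.

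The paper Schmidt-decomposes $W|0\rangle$ and $W|1\rangle$ and evaluates the scalar $\tr\bigl(\mathcal{E}(|+\rangle\langle+|)\,\mathcal{E}(|-\rangle\langle-|)\bigr)$. After the cross terms cancel, and using that the two marginals of a pure state have equal purity, this becomes $\frac14\tr\bigl([\mathcal{F}(|0\rangle\langle0|)-\mathcal{F}(|1\rangle\langle1|)]^2\bigr)$. Positivity then forces the two environment states to coincide. A final step, completing any pure state to a basis and then using convexity, covers all states.

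You instead isolate a linear lemma. For orthogonal $|\varphi\rangle,|\psi\rangle$, the vectors $W|\varphi\rangle$ and $W|\psi\rangle$ lie in $P\mathcal{H}_2\otimes\mathcal{H}_e$ and $P'\mathcal{H}_2\otimes\mathcal{H}_e$ with $PP'=0$, so $\tr_{\mathcal{H}_2}(W|\varphi\rangle\langle\psi|W^\dagger)=0$. Reapplying this to $|\pm\rangle$ gives $\sigma_{ii}=\sigma_{jj}$ with no quadratic bookkeeping.

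Your route gives a shorter argument and a formally stronger conclusion: $\tr_{\mathcal{H}_2}(WXW^\dagger)=\tr(X)\,\sigma$ for every operator $X$, coherences included. Equivalently, the Kraus operators $K_k=(\mathbb{I}\otimes\langle k|)W$ satisfy $K_l^\dagger K_k=\langle k|\sigma|l\rangle\,\mathbb{I}$, which is exactly what makes $V$ an isometry in Theorem 2. The paper's computation, for its part, gives an explicit identity (once the basis outputs are orthogonal): the overlap of the rotated outputs equals $\frac14$ of the squared Hilbert--Schmidt distance between the environment states. That is a quantitative information--disturbance relation rather than only a yes/no conclusion.

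The one identity you use tacitly, $\tr_{\mathcal{H}_2}\bigl((A\otimes 1)X(B\otimes 1)\bigr)=\tr_{\mathcal{H}_2}\bigl((BA\otimes 1)X\bigr)$, is standard.
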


We then show that OPO implies that the channel is a combination of appending a fixed state and an isometry. The proofs for both of these theorems are given in Appendix \ref{Appendix: OPO}.

\begin{theorem}
\label{th:OPO2}
Let $\mathcal{E:D(H}_1)\rightarrow\mathcal{D(H}_2)$ be an OPO. Then there exists an ancillary Hilbert space $\mathcal{H}_a$ with a state $\sigma\in\mathcal{D(H}_a)$ and an (injective but not necessarily bijective) isometry $V:\mathcal{H}_1\otimes\mathcal{H}_a\rightarrow \mathcal{H}_2$ such that
\begin{equation}
\label{eq:isometry}
\mathcal{E}(\rho)=V\rho\otimes\sigma V^\dagger.
\end{equation}
\end{theorem}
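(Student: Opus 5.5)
We would derive Theorem~\ref{th:OPO2} from Theorem~\ref{th:OPO} by a purification argument. Take a Stinespring isometry $W:\mathcal{H}_1\rightarrow\mathcal{H}_2\otimes\mathcal{H}_e$ for $\mathcal{E}$. By Theorem~\ref{th:OPO}, the environment marginal $\tr_{\mathcal{H}_2}(W\rho W^\dagger)=\sigma$ is the same for every input.

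The first step is to upgrade this from states to the whole operator space. Since states span all operators on $\mathcal{H}_1$, linearity gives $\tr_{\mathcal{H}_2}(WXW^\dagger)=\tr(X)\,\sigma$ for every operator $X$. In particular, for an orthonormal basis $\{|i\rangle\}$ of $\mathcal{H}_1$ we get
\begin{equation}
\tr_{\mathcal{H}_2}\bigl(W|i\rangle\langle j|W^\dagger\bigr)=\delta_{ij}\,\sigma .
\end{equation}

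The second step builds the isometry. Let $\mathcal{H}_a$ be a copy of $\mathcal{H}_e$ restricted to the support of $\sigma$, with spectral decomposition $\sigma=\sum_k p_k|k\rangle\langle k|$ and $p_k>0$. Pick a purification $|\Sigma\rangle=\sum_k\sqrt{p_k}\,|k\rangle_a|k\rangle_{e}$ of $\sigma$ on $\mathcal{H}_a\otimes\mathcal{H}_e$. Now consider the vectors $W|i\rangle\in\mathcal{H}_2\otimes\mathcal{H}_e$, together with the reference vectors $|i\rangle\otimes|\Sigma\rangle\in\mathcal{H}_1\otimes\mathcal{H}_a\otimes\mathcal{H}_e$.

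For each $i$ these two vectors have the same $\mathcal{H}_e$-marginal, namely $\sigma$. The cross terms also agree: the partial trace of $W|i\rangle\langle j|W^\dagger$ over $\mathcal{H}_2$ is $\delta_{ij}\sigma$, and the partial trace of $|i\rangle\langle j|\otimes|\Sigma\rangle\langle\Sigma|$ over $\mathcal{H}_1\otimes\mathcal{H}_a$ is also $\delta_{ij}\sigma$. Hence for the superposition vectors $\sum_i c_i W|i\rangle$ and $\sum_i c_i|i\rangle|\Sigma\rangle$, the $\mathcal{H}_e$-marginals coincide for all coefficients $c_i$.

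Equivalently, we can write $W|i\rangle=\sum_k\sqrt{p_k}\,|\phi_{ik}\rangle|k\rangle_e$. Checking this form is where we expect the main care to be needed. The condition $\tr_{\mathcal{H}_2}(W|i\rangle\langle j|W^\dagger)=\delta_{ij}\sigma$ reads
\begin{equation}
\sqrt{p_kp_l}\,\langle\phi_{jl}|\phi_{ik}\rangle=\delta_{ij}\,\sigma_{lk}=\delta_{ij}\delta_{kl}\,p_k ,
\end{equation}
so the vectors $\{|\phi_{ik}\rangle\}$ are orthonormal in $\mathcal{H}_2$. To justify this decomposition, expand $W|i\rangle$ in the eigenbasis of $\sigma$ extended to all of $\mathcal{H}_e$. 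Components along basis vectors outside the support of $\sigma$ must vanish, because the marginal is supported on $\mathrm{supp}\,\sigma$. Then define $|\phi_{ik}\rangle$ as the coefficient vector divided by $\sqrt{p_k}$.

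The third step defines $V:\mathcal{H}_1\otimes\mathcal{H}_a\rightarrow\mathcal{H}_2$ by $V|i\rangle|k\rangle_a=|\phi_{ik}\rangle$. Orthonormality of the $|\phi_{ik}\rangle$ makes $V$ an isometry. This also shows $\dim\mathcal{H}_2\geq\dim\mathcal{H}_1\cdot\mathrm{rank}\,\sigma$. By construction,
\begin{equation}
W|\psi\rangle=(V\otimes I_e)\bigl(|\psi\rangle\otimes|\Sigma\rangle\bigr).
\end{equation}
Tracing out $\mathcal{H}_e$ gives $\mathcal{E}(\rho)=V(\rho\otimes\sigma_a)V^\dagger$, where $\sigma_a=\tr_e|\Sigma\rangle\langle\Sigma|$ has the same spectrum as $\sigma$. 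This is the claimed form~(\ref{eq:isometry}), after renaming $\sigma_a$ as $\sigma$.
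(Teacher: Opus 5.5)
Your proposal is correct and follows essentially the same route as the paper. It uses a Stinespring isometry $W$, the input-independent environment marginal $\sigma$ from Theorem~\ref{th:OPO}, and the isometry $V|i\rangle|k\rangle_a=p_k^{-1/2}\langle k|_eW|i\rangle$, which is exactly the paper's $V$. The differences are minor: you get orthogonality of the $|\phi_{ik}\rangle$ across different inputs $i\neq j$ from linearity, $\tr_{\mathcal{H}_2}(W|i\rangle\langle j|W^\dagger)=\delta_{ij}\sigma$, rather than by invoking orthogonality preservation a second time. Your identity $W=(V\otimes I_e)(\,\cdot\,\otimes|\Sigma\rangle)$ then gives the mixed-state case directly instead of extending from pure states.
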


\begin{remark}
 Although the matrices $\sigma$ in Theorem \ref{th:OPO} and $\sigma$ in Theorem \ref{th:OPO2} serve different purposes in their respective theorems, and act on different Hilbert spaces, the two matrices are actually the same, at least in the sense that the non-zero parts of their spectra are identical.
\end{remark}

Note that $\mathcal{H}_a$ can also be 1-dimensional, in which case the isometry would be just $V:\mathcal{H}_1\rightarrow \mathcal{H}_2$. In particular, when $\mathcal{H}_1=\mathcal{H}_2$, the above result means that the only OPOs are the set of unitary operations. In the more general case, the most natural OPOs are perhaps those which append the original state with another fixed state and then perform a unitary operation in the compound Hilbert space:
\begin{equation}
\mathcal{E}(\rho)=U\rho\otimes\sigma U^\dagger.
\end{equation}

\section{Flatness-preserving operations}

We will say that a state $\rho\in\mathcal{D(H)}$ is flat if it is proportional to a projector, i.e., all of its non-zero eigenvalues are the same. Similarly to \cite{jasser2026journeyflatlanddoesantiflatness}, we will say that a state $\omega\in\mathcal{D(H}_A\otimes\mathcal{H}_B)$ in a bipartite system is flat$_A$, if the reduced density operator $\tr_B(\omega)=\omega_A$ is flat. As shown by \cite{jasser2026journeyflatlanddoesantiflatness} and as we discuss in Appendix \ref{Appendix: Flat properties}, some equivalent ways of defining a flat state $\rho$ are also
\begin{equation}
\tr(\rho^3)-\tr(\rho^2)^2=0
\end{equation}
or
\begin{equation}
\rho^2=\tr(\rho^2)\rho.
\end{equation}
It is also good to note that for a flat state $\rho$, the purity $\tr(\rho^2)$ is equal to the only non-zero eigenvalue.

Flat states do not form a convex set, so it is interesting to study, under which conditions is a convex combination of two flat states also a flat state. It turns out that the convex combination is a flat state if and only if the two original flat states are orthogonal and the weights of the two are chosen based on their respective purities.

\begin{theorem}
\label{th:convex}
Let $\rho_1, \rho_2\in \mathcal{D(H)}$ be two flat states and $\rho_1\neq\rho_2$ and let $0<p<1$. Then, the state
\begin{equation}
\sigma=p\rho_1+(1-p)\rho_2
\end{equation}
is flat if and only if
\begin{equation}
\rho_1\rho_2=0,
\end{equation}
(i.e., $\rho_1$ and $\rho_2$ have orthogonal support) and
\begin{equation}
p\tr(\rho_1^2)-(1-p)\tr(\rho_2^2)=0.
\end{equation}
\end{theorem}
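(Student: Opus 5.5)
The easy direction ("if") is a direct computation, and the "only if" direction needs a spectral argument. Throughout, write $\rho_i = \lambda_i P_i$, where $P_i$ is the support projector of rank $r_i$ and $\lambda_i = 1/r_i = \tr(\rho_i^2)$.

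For the "if" direction, suppose $\rho_1\rho_2=0$. Then $P_1P_2=0$, so $P_1+P_2$ is a projector and
\begin{equation}
\sigma = p\lambda_1 P_1 + (1-p)\lambda_2 P_2 .
\end{equation}
The condition $p\tr(\rho_1^2)=(1-p)\tr(\rho_2^2)$ says exactly that $p\lambda_1=(1-p)\lambda_2$. Hence $\sigma$ is proportional to $P_1+P_2$, and so it is flat.

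For the "only if" direction, assume $\sigma$ is flat, so $\sigma=\mu Q$ for a projector $Q$. Since $\sigma\ge p\rho_1$ and $\sigma\ge (1-p)\rho_2$, we have $\mathrm{supp}\,\rho_i\subseteq \mathrm{supp}\,\sigma$, that is, $P_i\le Q$.

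The key step is to compress $\sigma$ onto the range of $P_1$. Because $P_1\le Q$, we get
\begin{equation}
P_1\sigma P_1=\mu P_1 .
\end{equation}
Expanding the definition of $\sigma$ instead gives
\begin{equation}
P_1\sigma P_1=p\lambda_1P_1+(1-p)\lambda_2P_1P_2P_1 .
\end{equation}
Therefore
\begin{equation}
(1-p)\lambda_2\,P_1P_2P_1=(\mu-p\lambda_1)P_1 .
\end{equation}
So $P_1P_2P_1=cP_1$ for some scalar $c$, and $0\le c\le 1$ because $P_1P_2P_1$ is a contraction. By symmetry, $P_2P_1P_2=c'P_2$.

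Now I argue about $c$:
\begin{itemize}
\item If $c=1$, then $P_1\le P_2$.
\item If $0<c<1$, I compare ranks. The operators $P_1P_2P_1$ and $P_2P_1P_2$ have the same nonzero spectrum, so $c=c'$ and $r_1=r_2$. Then $\tr(P_1P_2)=cr_1$ is symmetric between the two, consistent so far. Taking traces of $\sigma=\mu Q$ is not enough to rule this case out, so I use the eigenvalue equation directly. On $\mathrm{range}(P_1)+\mathrm{range}(P_2)$, the relation $P_1P_2P_1=cP_1$ means the two subspaces sit at a single principal angle $\theta$ with $\cos^2\theta=c$. The space then decomposes into $2\times 2$ blocks, each spanned by a unit vector $u\in\mathrm{range}(P_1)$ and a unit vector $v\in\mathrm{range}(P_2)$ with $|\langle u,v\rangle|^2=c$. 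On each block, $\sigma=p\lambda_1|u\rangle\langle u|+(1-p)\lambda_2|v\rangle\langle v|$ is a sum of two non-parallel rank-one positive operators. Such an operator has two distinct nonzero eigenvalues unless it is a multiple of the identity on that block. Its trace is $p\lambda_1+(1-p)\lambda_2$ and its determinant is $p(1-p)\lambda_1\lambda_2(1-c)$. Equality of the eigenvalues forces $(\mathrm{tr})^2=4\det$. This is impossible by AM–GM when $c>0$, since $(a+b)^2\ge 4ab>4ab(1-c)$.
\item If $c=1$ and also $c'=1$, then $P_1=P_2$, so $\rho_1=\rho_2$, which contradicts the hypothesis. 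If $c=1$ but $P_1<P_2$ strictly, apply the same block reasoning with the roles swapped. On $\mathrm{range}(P_1)$, $\sigma$ has eigenvalue $p\lambda_1+(1-p)\lambda_2$. On $\mathrm{range}(P_2)\ominus\mathrm{range}(P_1)$, it has eigenvalue $(1-p)\lambda_2$. These differ since $p>0$, contradicting flatness.
\end{itemize}
Hence $c=0$, which gives $P_1P_2P_1=0$, so $P_2P_1=0$ and $\rho_1\rho_2=0$. The weight condition then follows from the "if" computation: $\sigma$ has eigenvalues $p\lambda_1$ and $(1-p)\lambda_2$ on orthogonal nonzero subspaces, and flatness forces them to be equal.

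The main obstacle is the intermediate case $0<c<1$, that is, excluding partially overlapping supports. I handle it through the principal-angle (two-projection) block decomposition and the $2\times2$ discriminant check above. Alternatively, one could use the criterion $\sigma^2=\tr(\sigma^2)\sigma$ from the appendix and compress it with $P_1$ and $P_2$, but the block argument is cleaner.
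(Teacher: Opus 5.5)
Your proof is correct, and it takes a genuinely different route from the paper's.

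\textbf{The paper's route.} The paper never looks at supports. It expands the trace criterion $\tr(\sigma^3)-\tr(\sigma^2)^2=0$ using $\rho_i^2=\tr(\rho_i^2)\rho_i$. This turns flatness of $\sigma$ into a quadratic equation in the single overlap $t=\tr(\rho_1\rho_2)$, with coefficients depending only on $p$, $r=\tr(\rho_1^2)$ and $s=\tr(\rho_2^2)$. The paper solves this quadratic and rewrites the discriminant as a sum of squares in two different ways. It then shows that the a priori bounds $0\le t\le\min(r,s)$ leave only two options: $t=0$ with $pr=(1-p)s$, or $t=r=s$, which forces $\rho_1=\rho_2$.

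\textbf{Your route.} You work at the operator level instead. Support inclusion and the compression $P_1\sigma P_1=\mu P_1$ give the rigidity relation $P_1P_2P_1=cP_1$, i.e.\ a single principal angle between the supports. The two-projection structure then reduces the overlapping case to a $2\times 2$ trace/determinant check.

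\textbf{Comparison.} The paper's argument needs only three scalar invariants and the bound on $t$. However, its key algebraic identities are unmotivated and tedious to verify, and they do not explain why partial overlap fails. Yours is more conceptual, shows explicitly where two distinct nonzero eigenvalues appear, and uses only the spectral definition of flatness.

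\textbf{Possible streamlining.} Two small simplifications are available.
\begin{itemize}
\item The equal-nonzero-spectrum argument for $P_1P_2P_1$ and $P_2P_1P_2$ works verbatim for $c=1$. It gives $c'=1$ and $r_1=r_2$, hence $P_1=P_2$ directly, so the case $P_1<P_2$ never arises.
\item For $0<c<1$ you do not need the full block decomposition or $r_1=r_2$. For any unit $u\in\mathrm{range}(P_1)$, the plane $\mathrm{span}\{u,P_2u\}$ is already $\sigma$-invariant, using $P_1P_2u=cu$. On that plane $\sigma$ is exactly the $2\times 2$ block you analyze.
\end{itemize}
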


The proof is given in Appendix \ref{Appendix: Convex combinations}. Intuitively, this result makes sense, of course, since, for example, for a two-dimensional Hilbert space, the only flat states are clearly the pure states and the maximally mixed state.

In this work, we will use two alternative definitions of a Flatness-Preserving Operation (FPO). Our main definition is defined analogously to the OPOs: A CPTP map $\mathcal{E}:\mathcal{D(H}_1)\rightarrow\mathcal{D(H}_2)$ is called an FPO if for any flat state $\rho\in\mathcal{D(H}_1)$ also the state $\mathcal{E}(\rho)$ is flat.

The second definition, which we will denote FPO*, is the one given in \cite{jasser2026journeyflatlanddoesantiflatness}, with a slight generalization allowing the codomain to be different from the domain. For that, let us recall the definition of the Rényi entropy \cite{Information:112906} $S_\alpha$ of a density matrix $\rho$:
\begin{equation}
S_\alpha(\rho)=\frac{1}{1-\alpha}\log\left(\tr(\rho^\alpha)\right),
\end{equation}
which can also be used to define the Rényi spread, the difference in Rényi entropies $\Delta_{\alpha\beta}$ \cite{PhysRevA.67.012326, jasser2026journeyflatlanddoesantiflatness}
\begin{equation}
\Delta_{\alpha\beta}(\rho)=S_\alpha(\rho)-S_\beta(\rho),
\end{equation}
for some $0\leq\alpha<\beta<\infty$. Then, a CPTP map $\mathcal{E}:\mathcal{D(H}_A\otimes \mathcal{H}_B)\rightarrow\mathcal{D(H}_C\otimes\mathcal{H}_D)$ is called an FPO* \cite{jasser2026journeyflatlanddoesantiflatness} if
\begin{equation}
\label{eq:FPO*}
\Delta_{\alpha\beta}\left(\tr_D[\mathcal{E}(\omega)]\right)\leq \Delta_{\alpha\beta}\left(\tr_B[\omega]\right)
\end{equation}
for all $\alpha<\beta$ and all $\omega\in\mathcal{D(H}_A\otimes\mathcal{H}_B)$. In particular, an FPO* will map flat$_A$ states to flat$_C$ states. In fact, later in Theorem 5 we show the converse to also be true: if a CPTP map takes all flat$_A$ states to flat$_C$ states, it must be FPO$^*$.

Although these two definitions look entirely different, they can actually be considered equivalent as we will show later in Theorem \ref{FPO equivalence}. In fact, a map is an FPO* if and only if the reduced map to Hilbert space $\mathcal{H}_C$ does not depend on the Hilbert space $\mathcal{H}_B$
\begin{equation}
\tr_D(\mathcal{E}(\omega))=\tr_D(\mathcal{E}[\tr_B(\omega)\otimes |0\rangle\langle 0|_B])=\mathcal{F}(\tr_B(\omega))
\end{equation}
and this induced map $\mathcal{F}:\mathcal{D(H}_A)\rightarrow\mathcal{D(H}_C)$ is an FPO.

However, we will start by showing that the result in Theorem \ref{th:convex} can be used to show that an FPO must either be a constant map or an OPO, where the appended state must be a flat state. This result is proven in Appendix \ref{Appendix: FPO}.

\begin{theorem}
\label{th:FPO}
A CPTP quantum channel $\mathcal{E:D(H}_1)\rightarrow\mathcal{D(H}_2)$ is an FPO if and only if either
\begin{enumerate}
\item $\mathcal{E}$ is a constant map to a flat state, i.e., there exists a flat state $\sigma\in \mathcal{D(H}_2)$ s.t. $\mathcal{E}(\rho)=\sigma\ \forall\rho\in\mathcal{D(H}_1)$ or
\item $\mathcal{E}(\rho)=V(\rho\otimes \sigma)V^\dagger$, for some flat state $\sigma$ in some ancillary Hilbert space $\mathcal{H}_a$ and an (injective but not necessarily bijective) isometry $V:\mathcal{H}_1\otimes\mathcal{H}_a\rightarrow \mathcal{H}_2$.
\end{enumerate}
\end{theorem}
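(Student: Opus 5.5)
The ``if'' direction is a direct check. A constant map to a flat state is trivially an FPO. For case 2, let $\rho=P/r$ and $\sigma=Q/s$ be flat, with $P,Q$ projectors. Then $\rho\otimes\sigma=(P\otimes Q)/(rs)$ is flat. Conjugating by an isometry $V$ sends the projector $P\otimes Q$ to the projector $V(P\otimes Q)V^\dagger$, because $V^\dagger V=\mathbb{1}$. So the output is again flat.

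For the ``only if'' direction I would use Theorem \ref{th:convex} to show that a non-constant FPO is an OPO, and then apply Theorem \ref{th:OPO2}.

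\textbf{Step 1 (reduction to orthogonal pure states).} Take orthogonal unit vectors $|\psi\rangle,|\phi\rangle$. For each $\theta$, the state $|\chi_\theta\rangle=\cos\theta|\psi\rangle+\sin\theta|\phi\rangle$ is pure, hence flat. The mixture $\tfrac12(|\psi\rangle\langle\psi|+|\phi\rangle\langle\phi|)$ is flat with rank two. Its image $\tfrac12(\mathcal{E}(\psi)+\mathcal{E}(\phi))$ is therefore flat, and $\mathcal{E}(\psi),\mathcal{E}(\phi)$ are flat too. By Theorem \ref{th:convex} with $p=1/2$, either $\mathcal{E}(\psi)=\mathcal{E}(\phi)$, or they have orthogonal supports and equal purity. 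Write $\psi\sim\phi$ in the first case.

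\textbf{Step 2 (dichotomy).} I want to show that either every pair of orthogonal pure states has equal images, or every pair has orthogonal images.

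First consider the case where some orthogonal pair $\psi,\phi$ has $\mathcal{E}(\psi)=\mathcal{E}(\phi)$. Then the states $\chi_\theta$ and $\chi_{\theta+\pi/2}$ are orthogonal, and $\mathcal{E}(\chi_\theta)+\mathcal{E}(\chi_{\theta+\pi/2})=\mathcal{E}(\psi)+\mathcal{E}(\phi)=2\mathcal{E}(\psi)$. If these two images were orthogonal with equal purity, then $\mathcal{E}(\psi)$ would be a flat mixture of two orthogonal flat states, so $\mathcal{E}(\chi_\theta)$ would be supported on a strict sub-block of $\mathrm{supp}\,\mathcal{E}(\psi)$. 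I would rule this out by a continuity argument in $\theta$. The set of $\theta$ with $\chi_\theta\sim\chi_{\theta+\pi/2}$ is closed. The orthogonal-support alternative is also closed, and it fails at $\theta=0$. Since an interval is connected, $\mathcal{E}(\chi_\theta)=\mathcal{E}(\psi)$ for all $\theta$; the same holds with complex phases.

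Next, propagate this through the whole space. Any two pure states lie in a common two-dimensional span, and every unit vector in that span is a member of some orthogonal pair in it. Using the first relation, I would chain such pairs to conclude that $\mathcal{E}$ is constant on all pure states. By linearity it is then constant on all of $\mathcal{D}(\mathcal{H}_1)$, and the constant value is flat. This is case 1.

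Now suppose instead that no orthogonal pair has equal images. Then all orthogonal pure states have orthogonal images. For general $\rho\sigma=0$, expand both in eigenvectors; all eigenvectors of $\rho$ are orthogonal to all eigenvectors of $\sigma$. Since images of orthogonal pure states are orthogonal positive operators, and sums of positive operators with mutually orthogonal supports stay orthogonal, $\mathcal{E}(\rho)\mathcal{E}(\sigma)=0$. So $\mathcal{E}$ is an OPO.

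\textbf{Step 3.} Theorem \ref{th:OPO2} now gives $\mathcal{E}(\rho)=V(\rho\otimes\sigma)V^\dagger$. Applying this to a pure $\rho$, the output is unitarily equivalent to $\sigma$ on its support, and it must be flat. Hence $\sigma$ is flat, which gives case 2.

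The main obstacle is Step 2: I must show cleanly that the ``equal'' and ``orthogonal'' alternatives cannot coexist across different pairs. I would first try the continuity and connectedness argument described there. If it gets messy, a backup is to use three mutually orthogonal states (when $\dim\geq3$) together with flatness of rank-three uniform mixtures. The $\dim=2$ case would then be handled by the rotation family above.
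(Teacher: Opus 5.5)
Your architecture is the paper's. Step 1 is the pure-state case of its lemma on orthogonal flat pairs. The rotation argument plays the role of its two-dimensional lemma. Step 3 concludes via Theorem \ref{th:OPO2} exactly as the paper does. Your connectedness proof of the two-dimensional case is a valid alternative to the paper's argument, which uses $\mathcal{E}(Z)=0$ and observes that the pure states $(\mathbb{I}+\cos\theta X+\sin\theta Z)/2$ would map to mixtures of the two distinct flat states $\mathcal{E}(|\pm\rangle\langle\pm|)$ with continuously varying weight, something Theorem \ref{th:convex} forbids. You should state explicitly that your two closed sets are disjoint, because a state is never orthogonal to itself; that is what makes the connectedness argument work.

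The gap is the propagation out of $S=\mathrm{span}\{\psi,\phi\}$. ``Chaining'' needs, in each new span such as $\mathrm{span}\{\psi,\xi\}$, an orthogonal pair whose images are already known to coincide. Nothing you have proved supplies one. Concretely, take a unit $\xi\perp S$. Step 1 applied to every pair $(u,\xi)$ with $u\in S$ is fully consistent with $\mathcal{E}(\xi)$ being orthogonal to $\tau:=\mathcal{E}(\psi)$ with equal purity, so those equal-weight constraints alone cannot exclude this case.

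What is missing is a flat mixture with \emph{unequal} weights, i.e.\ your backup, which is the paper's key step. The state $\frac13(|\psi\rangle\langle\psi|+|\phi\rangle\langle\phi|+|\xi\rangle\langle\xi|)$ is flat and maps to $\frac23\tau+\frac13\mathcal{E}(\xi)$. If $\mathcal{E}(\xi)\neq\tau$, Theorem \ref{th:convex} with $p=2/3$ forces $\tr(\mathcal{E}(\xi)^2)=2\tr(\tau^2)$. This contradicts the equal purities obtained from the pair $(\psi,\xi)$. The paper does the same thing by comparing the purity ratios of $|0\rangle\langle 0|$ and $\frac12(|0\rangle\langle 0|+|1\rangle\langle 1|)$ against $|j\rangle\langle j|$.

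You then still need the paper's orthogonal-subspaces step. Any unit $\eta\notin S\cup S^\perp$ can be written as $au+b\xi$ with unit $u\in S$ and unit $\xi\in S^\perp$. This orthogonal pair has equal images $\tau$, so your two-dimensional result on $\mathrm{span}\{u,\xi\}$ gives $\mathcal{E}(\eta)=\tau$. With these two steps in place of the chaining, the proof is complete.
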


Again, $\mathcal{H}_a$ can be 1-dimensional here, meaning that we do not actually need to append a state. Also, since we are choosing some isometry $V$ to some subspace of $\mathcal{H}_2$, we could as well restrict $\sigma$ to be a maximally mixed state in its respective Hilbert space. However, considering all flat states better illustrates the most natural ways to construct an FPO: appending an arbitrary flat state and/or performing a unitary operation.

Now, we are ready to consider the equivalence of the two definitions of Flatness-Preserving Operations. The following Theorem is proved in Appendix \ref{Appendix: FPO equivalence}.

\begin{theorem}
\label{FPO equivalence}
Let $\mathcal{E}:\mathcal{D(H}_A\otimes \mathcal{H}_B)\rightarrow\mathcal{D(H}_C\otimes\mathcal{H}_D)$ be a CPTP quantum channel and let $\omega\in\mathcal{D(H}_A\otimes\mathcal{H}_B)$. Then, the following statements are equivalent:
\begin{enumerate}
    \item $\mathcal{E}$ is an FPO*,
    \item If $\tr_B(\omega)$ is flat then $\tr_D(\mathcal{E}(\omega))$ is flat (i.e., if $\omega$ is flat$_A$ then $\mathcal{E}(\omega)$ is flat$_C$),
    \item There exists an FPO $\mathcal{F}:\mathcal{D(H}_A)\rightarrow\mathcal{D(H}_C)$ such that
    \begin{equation}
    \mathcal{F}(\tr_B (\omega))=\tr_D(\mathcal{E}(\omega))\quad \forall\omega\in\mathcal{D(H}_A\otimes\mathcal{H}_B).
    \end{equation}
\end{enumerate}
\end{theorem}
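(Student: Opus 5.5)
We will prove the cycle of implications $1\Rightarrow 2\Rightarrow 3\Rightarrow 1$.

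\emph{Step $1\Rightarrow 2$.} This is immediate from the Rényi spread. A state $\rho$ is flat if and only if $\Delta_{\alpha\beta}(\rho)=0$ for some (equivalently all) $\alpha<\beta$, since Rényi entropies of a fixed state coincide only when its nonzero spectrum is uniform. If $\tr_B(\omega)$ is flat, then \eqref{eq:FPO*} gives $0\le\Delta_{\alpha\beta}(\tr_D\mathcal{E}(\omega))\le 0$, so $\tr_D\mathcal{E}(\omega)$ is flat.

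\emph{Step $3\Rightarrow 1$.} This uses Theorem \ref{th:FPO}. Either $\mathcal{F}$ is constant to a flat state, in which case the left side of \eqref{eq:FPO*} vanishes, or $\mathcal{F}(\rho)=V(\rho\otimes\sigma)V^\dagger$ with $\sigma$ flat of rank $r$. In the second case the nonzero spectrum of $\mathcal{F}(\rho)$ is that of $\rho$ with each eigenvalue split into $r$ copies divided by $r$. Hence $S_\alpha(\mathcal{F}(\rho))=S_\alpha(\rho)+\log r$ for all $\alpha$, and the spread is unchanged. Applying this with $\rho=\tr_B\omega$ and $\tr_D\mathcal{E}(\omega)=\mathcal{F}(\rho)$ gives \eqref{eq:FPO*}, in fact with equality.

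\emph{Step $2\Rightarrow 3$.} This is the main obstacle, and the key claim is that $\tr_D\mathcal{E}(\omega)$ depends on $\omega$ only through $\omega_A=\tr_B\omega$. I would argue as follows.

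1. Every state $\omega$ with $\omega_A=I_A/d_A$ is flat$_A$, so by hypothesis its image is flat$_C$.
2. Take two such states $\omega_1,\omega_2$ and consider convex mixtures $p\omega_1+(1-p)\omega_2$. These still have maximally mixed marginal on $A$, so all their reduced images $\tau(p)=p\tau_1+(1-p)\tau_2$ are flat.
3. By Theorem \ref{th:convex}, if $\tau_1\neq\tau_2$ then $p\,\tr\tau_1^2=(1-p)\tr\tau_2^2$ can hold for only one $p$, yet flatness would be needed for all $p\in(0,1)$. Hence $\tau_1=\tau_2$.
4. This gives independence from the $B$-part only for maximally mixed marginals. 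To extend it, use the same trick with general flat marginals: for a fixed projector $P$ onto a subspace of $A$, all $\omega$ with $\omega_A=P/\tr P$ form a convex set of flat$_A$ states, so the image is constant on that set.
5. Then pass to arbitrary $\omega$ by linearity. Write $\omega$ as a real linear combination of states with flat $A$-marginals in a way that is determined by $\omega_A$ up to terms with cancelling $B$-dependence. Concretely, consider the map $X\mapsto\tr_D\mathcal{E}(X)$ restricted to operators with $\tr_B X=0$; the goal is to show it vanishes. Such $X$ are spanned by differences $\omega_1-\omega_2$ with a common $A$-marginal. Each such difference can be decomposed into pieces with common flat marginal, for instance $\omega_1-\omega_2$ with marginals equal to pure states $|\psi\rangle\langle\psi|$. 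This works because any traceless-on-$B$ Hermitian operator is a linear combination of $|\psi\rangle\langle\psi|\otimes(\beta_1-\beta_2)$ and of terms $(|\psi\rangle\langle\phi|+h.c.)\otimes Y$. The latter can be rewritten via pure superposition marginals by polarization. For pure marginals the relevant states are product states $|\psi\rangle\langle\psi|\otimes\beta$, and step 4 applies.

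Having established independence from $B$, define $\mathcal{F}(\rho)=\tr_D\mathcal{E}(\rho\otimes|0\rangle\langle0|_B)$. This map is CPTP, it satisfies $\mathcal{F}(\tr_B\omega)=\tr_D\mathcal{E}(\omega)$, and it sends flat states to flat states by hypothesis 2, so it is an FPO. The delicate point I expect is the polarization argument in step 5, where one must make sure the off-diagonal terms $|\psi\rangle\langle\phi|\otimes Y$ are genuinely covered by differences of states with equal flat marginals.
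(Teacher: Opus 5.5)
Your proof is correct. Steps $1\Rightarrow 2$ and $3\Rightarrow 1$ match the paper, but your $2\Rightarrow 3$ rests on a different key lemma.

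The paper fixes a flat $\rho$ and notes that $\sigma\mapsto\tr_D[\mathcal{E}(\rho\otimes\sigma)]$ sends every state on $\mathcal{H}_B$, flat or not, to a flat state. It then invokes the full classification, Theorem \ref{th:FPO}, to conclude that this map is constant, since the isometric branch would keep a non-flat $\sigma$ non-flat.

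You use only Theorem \ref{th:convex}. Along the segment $|\psi\rangle\langle\psi|\otimes(p\beta_1+(1-p)\beta_2)$ every input is flat$_A$, so every mixture of the two endpoint images is flat. The purity condition allows this for at most one $p$ unless the two images coincide. This is the same mechanism the paper uses inside its two-dimensional lemma. Your route is therefore more self-contained: it bypasses the OPO theorems and the lemma chain behind Theorem \ref{th:FPO}. The paper's route is shorter only because that classification is already available.

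Of your steps 1--4, only the rank-one, product-input case is used in step 5. The maximally mixed and entangled cases can be dropped.

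The point you flag as delicate is not an obstacle. We have $|\psi\rangle\langle\phi|+|\phi\rangle\langle\psi|=(|\psi\rangle+|\phi\rangle)(\langle\psi|+\langle\phi|)-|\psi\rangle\langle\psi|-|\phi\rangle\langle\phi|$, and replacing $|\phi\rangle$ by $i|\phi\rangle$ covers the other phase. Hence $\ker\tr_B=L(\mathcal{H}_A)\otimes\{Y:\tr Y=0\}$ is spanned by operators $|\chi\rangle\langle\chi|\otimes(\beta_1-\beta_2)$, after normalizing $\chi$. This is exactly the paper's observation that product states span all operators, so the final linear-extension step is the same in both proofs.
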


\begin{remark}
The condition 
\begin{equation}
\mathcal{F}(\tr_B (\omega))=\tr_D(\mathcal{E}(\omega))\quad \forall\omega\in\mathcal{D(H}_A\otimes\mathcal{H}_B)
\end{equation}
for some quantum channel (not necessarily FPO) $\mathcal{F}:\mathcal{D(H}_A)\rightarrow\mathcal{D(H}_C)$ is known in the literature as $\mathcal{E}$ being \emph{semicausal} meaning that the channel $\mathcal{E}$ does not allow any signaling from system $B$ to system $A$ (from Bob to Alice) \cite{PhysRevA.64.052309,priv_Hamma}. This condition is also known to be equivalent with $\mathcal{E}$ being \emph{semilocalizable} meaning that $\mathcal{E}$ can be implemented with one-way quantum communication from Alice to Bob \cite{PhysRevA.64.052309,Eggeling_2002,priv_Hamma}. Thus, an FPO* must always be both semicausal and semilocalizable.
\end{remark}

Theorem \ref{FPO equivalence} combined with theorem \ref{th:FPO} means that if a channel $\mathcal{E}:\mathcal{D(H}_A\otimes \mathcal{H}_B)\rightarrow\mathcal{D(H}_C\otimes\mathcal{H}_D)$ satisfies the FPO* condition in equation (\ref{eq:FPO*}), then either the left-hand side of that inequality must be zero for all $\omega\in\mathcal{D(H}_A\otimes\mathcal{H}_B)$
\begin{equation}
\Delta_{\alpha\beta}\left(\tr_D[\mathcal{E}(\omega)]\right)=0\leq \Delta_{\alpha\beta}\left(\tr_B[\omega]\right)
\end{equation}
or we will have an equality for all $\omega\in\mathcal{D(H}_A\otimes\mathcal{H}_B)$
\begin{equation}
\Delta_{\alpha\beta}\left(\tr_D[\mathcal{E}(\omega)]\right)= \Delta_{\alpha\beta}\left(\tr_B[\omega]\right).
\end{equation}

\medskip

\section{Conclusions}

We have shown that in order for a quantum channel to preserve the flatness-property of states, and not increase the antiflatness, the only allowed operations are any combination of the following in any order
\begin{itemize}
    \item Append a flat state,
    \item Perform a unitary operation (or, more generally, map with an isometry to another Hilbert space),
    \item Discard the state and replace with a flat state.
\end{itemize}

Our results show that this restriction also holds when considering flatness of a reduced state in a bipartite system, i.e., these are the only operations that are allowed on the reduced state. This means that in a resource theory for antiflatness, free operations are scarce, which might have been expected. To paraphrase similar catchphrases for entanglement and magic, ``antiflatness is everywhere''.

\medskip

\section*{Acknowledgments}

We thank A.~Hamma for extremely helpful comments on the draft version and for pointing out the connection to semicausality and semilocalizability. We thank N.~Pranzini for helpful discussions. OV's research is funded by the University of Helsinki Doctoral School. EKV acknowledges the financial support of the Research Council of Finland through the Finnish Quantum Flagship project (358878, UH) and the Research Council of Finland grant 1371600.

\bibliography{Flatness}
\bibliographystyle{apsrev4-2.bst}

\onecolumngrid

\appendix

\section{Orthogonality-Preserving operations}
\label{Appendix: OPO}
We will call a completely positive trace-preserving (CPTP) quantum channel $\mathcal{E:D(H}_1)\rightarrow\mathcal{D(H}_2)$ between finite Hilbert spaces an Orthogonality-Preserving Operation (OPO), if
\begin{equation}
\rho\sigma=0\Rightarrow \mathcal{E}(\rho)\mathcal{E}(\sigma)=0\quad \forall\rho,\sigma\in\mathcal{D(H}_1).
\end{equation}
\begin{thm}
\label{Appendix th:OPO}
Let $\mathcal{E:D(H}_1)\rightarrow\mathcal{D(H}_2)$ be an OPO. Let $\mathcal{H}_e$ be an ancillary Hilbert space and $W:\mathcal{H}_1\rightarrow\mathcal{H}_2\otimes\mathcal{H}_e$ an isometry such that
\begin{equation}
\mathcal{E}(\rho)=\tr_{\mathcal{H}_e}(W\rho W^\dagger).
\end{equation}
Then, there exists a state $\sigma\in \mathcal{D(H}_e)$ such that
\begin{equation}
\tr_{\mathcal{H}_2}(W\rho W^\dagger)=\sigma\quad\forall\rho\in\mathcal{D(H}_1).
\end{equation}
\end{thm}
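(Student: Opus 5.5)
We prove that the complementary channel $\mathcal{E}^c(\rho)=\tr_{\mathcal{H}_2}(W\rho W^\dagger)$ is constant. First on pure states of an orthonormal basis, then on all pure states via a two-dimensional argument, and finally on all states by linearity.

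\textbf{Step 1 (orthogonal pure inputs).} Let $|\psi\rangle,|\phi\rangle\in\mathcal{H}_1$ be orthonormal, and write $W|\psi\rangle=\sum_k |a_k\rangle\otimes|e_k\rangle$, $W|\phi\rangle=\sum_l|b_l\rangle\otimes|f_l\rangle$ with $\{|e_k\rangle\}$ and $\{|f_l\rangle\}$ orthonormal in $\mathcal{H}_e$ (Schmidt decompositions). Orthogonality preservation gives $\mathcal{E}(\psi)\mathcal{E}(\phi)=0$. Since $\mathcal{E}(\psi)=\sum_k|a_k\rangle\langle a_k|$ and $\mathcal{E}(\phi)=\sum_l|b_l\rangle\langle b_l|$, their supports are orthogonal, so $\langle a_k|b_l\rangle=0$ for all $k,l$. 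Hence every matrix element of the cross term vanishes:
\begin{equation}
\tr_{\mathcal{H}_2}(W|\psi\rangle\langle\phi|W^\dagger)=\sum_{k,l}\langle b_l|a_k\rangle\,|e_k\rangle\langle f_l|=0 .
\end{equation}
Thus the complementary channel annihilates $|\psi\rangle\langle\phi|$ whenever $\psi\perp\phi$.

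\textbf{Step 2 (equal outputs on orthogonal pairs).} Apply Step 1 to the orthonormal pair $|\pm\rangle=(|\psi\rangle\pm|\phi\rangle)/\sqrt2$. Expanding,
\begin{equation}
|+\rangle\langle-|=\tfrac12\bigl(|\psi\rangle\langle\psi|-|\phi\rangle\langle\phi|-|\psi\rangle\langle\phi|+|\phi\rangle\langle\psi|\bigr).
\end{equation}
By Step 1 the complementary channel kills $|+\rangle\langle-|$, $|\psi\rangle\langle\phi|$ and $|\phi\rangle\langle\psi|$. Applying $\mathcal{E}^c$ to the expansion therefore gives $\mathcal{E}^c(|\psi\rangle\langle\psi|)=\mathcal{E}^c(|\phi\rangle\langle\phi|)$.

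\textbf{Step 3 (all inputs).} Fix an orthonormal basis $\{|i\rangle\}$ of $\mathcal{H}_1$. By Step 2 all $\mathcal{E}^c(|i\rangle\langle i|)$ equal a common state $\sigma$, and by Step 1 $\mathcal{E}^c(|i\rangle\langle j|)=0$ for $i\neq j$. For any $\rho=\sum_{ij}\rho_{ij}|i\rangle\langle j|$, linearity then gives $\mathcal{E}^c(\rho)=\sum_i\rho_{ii}\sigma=\tr(\rho)\,\sigma=\sigma$. This holds for every $\rho\in\mathcal{D}(\mathcal{H}_1)$, as required. If $\dim\mathcal{H}_1=1$ the statement is trivial.

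The main obstacle is Step 1: one has to turn orthogonality of the reduced outputs on $\mathcal{H}_2$ into vanishing of the off-diagonal part of the complementary channel. The Schmidt-decomposition argument handles this, because the supports of $\mathcal{E}(\psi)$ and $\mathcal{E}(\phi)$ are exactly the spans of the $|a_k\rangle$ and the $|b_l\rangle$. The rest is linear algebra.
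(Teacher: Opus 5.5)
Your proof is correct. It uses the same two ingredients as the paper, namely Schmidt decompositions to turn $\mathcal{E}(|\psi\rangle\langle\psi|)\mathcal{E}(|\phi\rangle\langle\phi|)=0$ into $\langle a_k|b_l\rangle=0$, and the rotated pair $|\pm\rangle$. The key step, however, is different.

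\textbf{The paper's route.} It expands the overlap of $\mathcal{E}(|+\rangle\langle+|)$ and $\mathcal{E}(|-\rangle\langle-|)$. The condition $\langle a_k|b_l\rangle=0$ kills the cross terms, and equal purities of the two marginals of a pure state turn $\tr(\mathcal{E}(\cdot)^2)$ into $\tr(\mathcal{F}(\cdot)^2)$. This gives $0=\tfrac14\tr\bigl([\mathcal{F}(|0\rangle\langle0|)-\mathcal{F}(|1\rangle\langle1|)]^2\bigr)$, and positivity finishes the argument.

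\textbf{Your route.} You isolate a different intermediate fact: the complementary map annihilates coherences $|\psi\rangle\langle\phi|$ between orthogonal inputs. Applying this to $|\pm\rangle$ gives equality of the diagonal outputs by linearity alone, with no trace computation or positivity argument.

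\textbf{What each buys.} Your route makes the extension to all inputs explicit through $\mathcal{E}^c(|j\rangle\langle j'|)=\delta_{jj'}\sigma$; the paper leaves this step as ``straightforward.'' The same relation also gives at once the orthonormality of the vectors $\lambda_k^{-1}\langle k|_eW|j\rangle_1$ needed to show that $V$ is an isometry in the proof of Theorem~\ref{Appendix th:OPO2}, because their inner products are matrix elements of $\mathcal{E}^c(|j\rangle\langle j'|)$. The paper's identity instead states the conclusion as the vanishing of a Hilbert--Schmidt distance between environment states, read off directly from the overlap of two channel outputs.
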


\begin{proof}
Let us define $\mathcal{F:D(H}_1)\rightarrow\mathcal{D(H}_e)$ so that
\begin{equation}
\mathcal{F}(\rho)\equiv\tr_{\mathcal{H}_2}(W\rho W^\dagger).
\end{equation}
Now, let us pick two orthogonal states $|0\rangle, |1\rangle\in\mathcal{H}_1$ and look at the Schmidt decompositions
\begin{equation}
\label{Appendix eq:Schmidt1}
W|0\rangle_1=\sum_i \lambda_i^0 |i_0\rangle_2 |i_0\rangle_e,
\end{equation}
\begin{equation}
\label{Appendix eq:Schmidt2}
W|1\rangle_1=\sum_j \lambda_j^1 |j_1\rangle_2 |j_1\rangle_e,
\end{equation}
for some non-negative, real $\lambda_i^0, \lambda_j^1$.
From the fact that $\mathcal{E}$ is orthogonality preserving, we know that $\lambda_i^0 \lambda_j^1\langle i_0|j_1\rangle_2=0$ for all $i,j$. Next, let us consider the states $|\pm\rangle_1=\frac{1}{\sqrt{2}}(|0\rangle_1\pm|1\rangle_1)$. Then,
\begin{equation}
W|\pm\rangle_1=\frac{1}{\sqrt{2}}\left(\sum_i \lambda_i^0 |i_0\rangle_2 |i_0\rangle_e\pm\sum_j \lambda_j^1 |j_1\rangle_2 |j_1\rangle_e\right)
\end{equation}
and
\begin{equation}
\mathcal{E}(|\pm\rangle\langle\pm|)=\frac{1}{2}\bigg(\mathcal{E}(|0\rangle\langle 0|)+\mathcal{E}(|1\rangle\langle 1|)\pm\sum_{i,j}\lambda_i^0\lambda_j^1\big[{\langle i_0|j_1\rangle}_e{|j_1\rangle\langle i_0|}_2+{\langle j_1|i_0\rangle}_e{|i_0\rangle\langle j_1|}_2\big]\bigg).
\end{equation}
Since also $\mathcal{E}(|+\rangle\langle +|)$ and $\mathcal{E}(|-\rangle\langle-|)$ must be orthogonal, and also our earlier observation that $\lambda_i^0 \lambda_j^1{\langle i_0|j_1\rangle}_2=0$ for all $i,j$ ends up clearing any cross terms, we have
\begin{eqnarray}
0&=&\tr\big(\mathcal{E}(|+\rangle\langle +|)\mathcal{E}(|-\rangle\langle-|)\big)\\
&=&\frac{1}{4}\left[\tr\left(\mathcal{E}(|0\rangle\langle 0|)^2\right)+\tr\left(\mathcal{E}(|1\rangle\langle 1|)^2\right)-2\sum_{ij}{\lambda_i^0}^2{\lambda_j^1}^2|{\langle i_0|j_1\rangle}_e|^2\right]\\
&=&\frac{1}{4}\left[\tr\left(\mathcal{F}(|0\rangle\langle 0|)^2\right)+\tr\left(\mathcal{F}(|1\rangle\langle 1|)^2\right)-2\tr\left(\mathcal{F}(|0\rangle\langle 0|)\mathcal{F}(|1\rangle\langle 1|)\right)\right]\\
&=&\frac{1}{4}\tr\left(\left[\mathcal{F}(|0\rangle\langle 0|)-\mathcal{F}(|1\rangle\langle 1|)\right]^2\right).
\end{eqnarray}
Since $\mathcal{F}(|0\rangle\langle 0|)-\mathcal{F}(|1\rangle\langle 1|)$ is Hermitian, it follows that $[\mathcal{F}(|0\rangle\langle 0|)-\mathcal{F}(|1\rangle\langle 1|)]^2$ must be a positive operator and has zero trace if and only if $\mathcal{F}(|0\rangle\langle 0|)=\mathcal{F}(|1\rangle\langle 1|)$. From there, it is straightforward to see that we must have
\begin{equation}
\mathcal{F}(|\varphi\rangle\langle\varphi|)=\mathcal{F}(\mathbb{I}/d)\quad\forall |\varphi\rangle\in\mathcal{H}_1,
\end{equation}
where $d$ is the dimension of $\mathcal{H}_1$.
\end{proof}

\begin{thm}
\label{Appendix th:OPO2}
Let $\mathcal{E:D(H}_1)\rightarrow\mathcal{D(H}_2)$ be an OPO. Then there exists an ancillary Hilbert space $\mathcal{H}_a$ with a state $\sigma\in\mathcal{D(H}_a)$ and an (injective but not necessarily bijective) isometry $V:\mathcal{H}_1\otimes\mathcal{H}_a\rightarrow \mathcal{H}_2$ such that
\begin{equation}
\label{Appendix eq:isometry}
\mathcal{E}(\rho)=V\rho\otimes\sigma V^\dagger.
\end{equation}
\end{thm}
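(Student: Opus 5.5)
We start from a Stinespring dilation $W:\mathcal{H}_1\rightarrow\mathcal{H}_2\otimes\mathcal{H}_e$ of $\mathcal{E}$ and invoke Theorem \ref{Appendix th:OPO}. That theorem tells us the complementary channel $\mathcal{F}(\rho)=\tr_{\mathcal{H}_2}(W\rho W^\dagger)$ is constant, equal to a fixed state $\sigma_e\in\mathcal{D(H}_e)$. We write its spectral decomposition as $\sigma_e=\sum_{k=1}^r s_k|e_k\rangle\langle e_k|$ with $s_k>0$, and let $\mathcal{H}_a$ be an $r$-dimensional space with $\sigma=\sum_k s_k|k\rangle\langle k|_a$.

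\textbf{Step 1: a product structure for $W$.} We fix an orthonormal basis $\{|n\rangle\}$ of $\mathcal{H}_1$. Since $\tr_2(W|n\rangle\langle n|W^\dagger)=\sigma_e$, the Schmidt decomposition of each $W|n\rangle$ has environment vectors given by the $|e_k\rangle$ and coefficients $\sqrt{s_k}$. Degenerate $s_k$ need care, but we can still choose a common eigenbasis $\{|e_k\rangle\}$ and write
\begin{equation}
W|n\rangle=\sum_k\sqrt{s_k}\,|f_{n,k}\rangle_2|e_k\rangle_e ,
\end{equation}
where $|f_{n,k}\rangle=s_k^{-1/2}(\mathbb{I}\otimes\langle e_k|)W|n\rangle$. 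For fixed $n$ the reduced state on $e$ is diagonal with entries $s_k$, which gives $\langle f_{n,k}|f_{n,l}\rangle=\delta_{kl}$.

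\textbf{Step 2: orthonormality across different $n$.} This is the main obstacle: we must show $\langle f_{n,k}|f_{m,l}\rangle=\delta_{nm}\delta_{kl}$. For $n\neq m$, orthogonality preservation applied to $|n\rangle,|m\rangle$ gives $\mathcal{E}(|n\rangle\langle n|)\mathcal{E}(|m\rangle\langle m|)=0$. The range of $\mathcal{E}(|n\rangle\langle n|)$ is spanned by the $|f_{n,k}\rangle$ (all $s_k>0$), so $\langle f_{n,k}|f_{m,l}\rangle=0$ for every $k,l$. This is just the observation $\lambda_i^0\lambda_j^1\langle i_0|j_1\rangle_2=0$ from the previous proof. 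Combining with Step 1, $\{|f_{n,k}\rangle\}_{n,k}$ is an orthonormal family in $\mathcal{H}_2$.

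\textbf{Step 3: construct $V$ and verify.} Define $V:\mathcal{H}_1\otimes\mathcal{H}_a\rightarrow\mathcal{H}_2$ by $V|n\rangle|k\rangle=|f_{n,k}\rangle$. By Step 2, $V$ is an isometry, so $\dim\mathcal{H}_2\geq d\,r$. It remains to check $\mathcal{E}(\rho)=V(\rho\otimes\sigma)V^\dagger$ for every $\rho$, including the off-diagonal terms $|n\rangle\langle m|$. By linearity it suffices to compute $\mathcal{E}(|n\rangle\langle m|)=\tr_e(W|n\rangle\langle m|W^\dagger)$. Tracing out $e$ in the expansion above gives $\sum_k s_k|f_{n,k}\rangle\langle f_{m,k}|$, and this equals $V(|n\rangle\langle m|\otimes\sigma)V^\dagger$.

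The one subtle point is that Step 1 needs $\tr_2(W|n\rangle\langle m|W^\dagger)$ to be correct for $n\neq m$ as well, not only the diagonal case. This is automatic here: $|f_{n,k}\rangle$ is defined by projecting onto $|e_k\rangle$, so the expansion of $W|n\rangle$ is exact and no cross-term issue arises in the trace over $e$.
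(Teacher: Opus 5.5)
Your proposal is correct and follows the paper's proof essentially step for step. It uses the same Stinespring dilation, the same appeal to Theorem~\ref{Appendix th:OPO}, and the same isometry: your $|f_{n,k}\rangle=s_k^{-1/2}(\mathbb{I}\otimes\langle e_k|)W|n\rangle$ is exactly the paper's $\lambda_i^{-1}\langle i|_e W|j\rangle_1$. The differences are cosmetic: you spell out the cross-$n$ orthonormality via $\mathcal{E}(|n\rangle\langle n|)\mathcal{E}(|m\rangle\langle m|)=0$, you verify the identity on the matrix units $|n\rangle\langle m|$ rather than on pure states $|\varphi\rangle\langle\varphi|$, and your projection-based definition of $|f_{n,k}\rangle$ cleanly sidesteps the degenerate-Schmidt ambiguity that the paper passes over.
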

\begin{proof}
Like before, let $\mathcal{H}_e$ be an ancillary Hilbert space and $W:\mathcal{H}_1\rightarrow\mathcal{H}_2\otimes\mathcal{H}_e$ an isometry such that
\begin{equation}
\mathcal{E}(\rho)=\tr_{\mathcal{H}_e}(W\rho W^\dagger).
\end{equation}
Also, let $\sigma\in \mathcal{D(H}_e)$ be the state described in Theorem \ref{Appendix th:OPO} for which
\begin{equation}
\tr_{\mathcal{H}_2}(W\rho W^\dagger)=\sigma\quad\forall\rho\in\mathcal{D(H}_1)
\end{equation}
and let us write
\begin{equation}
\sigma=\sum_i \lambda_i^2 |i\rangle\langle i|,\quad \lambda_i>0.
\end{equation}
Based on Theorem \ref{Appendix th:OPO}, we can find for any state $|\varphi\rangle\in \mathcal{H}_1$ some orthonormal states $|\varphi_i\rangle\in\mathcal{H}_2$ such that
\begin{equation}
W |\varphi\rangle_1 =\sum_i\lambda_i {|\varphi_i\rangle}_2|i\rangle_e
\end{equation}
and
\begin{equation}
\mathcal{E}(|\varphi\rangle\langle\varphi|)=\sum_i\lambda_i^2|\varphi_i\rangle\langle\varphi_i|_2.
\end{equation}
Then, let us define $\mathcal{H}_a$ as the subspace of $\mathcal{H}_e$ spanned by the vectors $|i\rangle$ so that we can clearly restrict $\sigma$ to that subspace. Let $|j\rangle_1$ be an orthonormal basis of $\mathcal{H}_1$. Then, we can define
\begin{equation}
V=\sum_{i,j} \frac{1}{\lambda_i}\langle i|_e W|j\rangle_1 \langle j|_1\langle i|_e.
\end{equation}
Now, we need to show that $V$ is an isometry and that it satisfies equation (\ref{Appendix eq:isometry}).  The fact that $\frac{1}{\lambda_i}\langle i|_e W|j\rangle_1$ are orthonormal for all $i,j$ follows straight from the Schmidt decomposition of $W|j\rangle_1$ and that $\mathcal{E}$ is orthogonality preserving, and, consequently, $V$ is an isometry. Now, let $|\varphi\rangle_1\in\mathcal{H}_1$. Then,
\begin{equation}
V |\varphi\rangle\langle \varphi|\otimes \sigma V^\dagger=\sum_i\langle i|_e W|\varphi\rangle_1\langle \varphi|_1 W^\dagger|i\rangle_e=\sum_i\lambda_i^2|\varphi_i\rangle\langle\varphi_i|_2=\mathcal{E}(|\varphi\rangle\langle \varphi|).
\end{equation}
Therefore, equation (\ref{Appendix eq:isometry}) must hold for all pure states $|\varphi\rangle_1\in\mathcal{H}_1$ so it must hold also for all mixed states.
\end{proof}

\section{Properties of flat states}
\label{Appendix: Flat properties}

We will say that $\rho\in\mathcal{D(H)}$ is a flat state if it is proportional to a projector. However, there are multiple equivalent ways to define it.
\begin{lemma}
Let $\rho\in\mathcal{D(H)}$. The following statements are equivalent:
\begin{enumerate}
\item $\rho$ is proportional to a projector.
\item $\rho$ has only one non-zero eigenvalue (in addition to potentially having zero as an eigenvalue).
\item $\tr\left(\rho^3\right)-\tr\left(\rho^2\right)^2=0$.
\item $\rho^2=\tr(\rho^2)\rho$.
\end{enumerate}
\end{lemma}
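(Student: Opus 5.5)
We work throughout with the spectral decomposition of $\rho$: write $\rho=\sum_k \mu_k P_k$, where the $\mu_k>0$ are the distinct non-zero eigenvalues and the $P_k$ are mutually orthogonal spectral projectors of ranks $r_k$. Unit trace gives $\sum_k r_k\mu_k=1$. We prove the cycle $1\Rightarrow 2\Rightarrow 4\Rightarrow 3\Rightarrow 2\Rightarrow 1$.

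The implications $1\Leftrightarrow 2$ are immediate. If $\rho=cP$ with $P$ a projector, then its only non-zero eigenvalue is $c$. Conversely, if there is a single non-zero eigenvalue $\mu$, then $\rho=\mu P_1$.

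For $2\Rightarrow 4$, write $\rho=\mu P$ with $\mu=1/\mathrm{rank}(P)$. Then $\rho^2=\mu^2P=\mu\rho$ and $\tr(\rho^2)=\mu^2\,\mathrm{rank}(P)=\mu$, which gives $\rho^2=\tr(\rho^2)\rho$.

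For $4\Rightarrow 3$, multiply the identity $\rho^2=\tr(\rho^2)\rho$ by $\rho$ and take the trace. This gives $\tr(\rho^3)=\tr(\rho^2)\tr(\rho^2)=\tr(\rho^2)^2$.

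The step that needs an actual inequality is $3\Rightarrow 2$. Consider the probability distribution $q_k=r_k\mu_k$ over the index $k$. In these terms $\tr(\rho^2)=\sum_k q_k\mu_k=\mathbb{E}_q[\mu]$ and $\tr(\rho^3)=\sum_k q_k\mu_k^2=\mathbb{E}_q[\mu^2]$. Hence
\begin{equation}
\tr(\rho^3)-\tr(\rho^2)^2=\mathrm{Var}_q(\mu)\geq 0 .
\end{equation}
Equivalently, this is Cauchy--Schwarz applied to the vectors $(\sqrt{q_k})$ and $(\sqrt{q_k}\mu_k)$. Since every $q_k>0$, the variance vanishes if and only if $\mu_k$ is the same for all $k$. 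Because the $\mu_k$ were chosen distinct, this means there is exactly one non-zero eigenvalue, which is statement 2.

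The key choice in this step is to weight each eigenvalue by its spectral mass $q_k$ rather than by its multiplicity alone; with that weighting, both traces become moments of the same distribution. The remaining implications are direct computations.
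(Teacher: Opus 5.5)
Your proof is correct and essentially the paper's: your $\mathrm{Var}_q(\mu)$ is the same quantity as the paper's identity $\tr(\rho^3)-\tr(\rho^2)^2=\sum_{i<j}\lambda_i\lambda_j(\lambda_i-\lambda_j)^2$, which is the pairwise form of that variance once multiplicities are unfolded and each eigenvalue $\lambda_i$ serves as its own weight. The only difference is organizational: running the cycle $2\Rightarrow4\Rightarrow3\Rightarrow2$, with the one-line trace argument for $4\Rightarrow3$, lets you skip the paper's separate eigenbasis computation of $\rho^2-\tr(\rho^2)\rho$ for $4\Rightarrow2$.
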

\begin{proof}
The equivalence of the first two is trivial. Let us show that $2\Leftrightarrow 3$. Let us assume that $\rho$ has eigenvalues $\lambda_i\geq 0$. Then
\begin{eqnarray}
\tr\left(\rho^3\right)-\tr\left(\rho^2\right)^2&=&\sum_i \lambda_i^3-\left(\sum_j \lambda_j^2\right)^2\\
&=&\sum_i\lambda_i^3\sum_k\lambda_k-\sum_j\lambda_j^2\sum_l\lambda_l^2\\
&=&\sum_{i,j}\lambda_i\lambda_j\left(\lambda_i^2-\lambda_i\lambda_j\right)\\
&=&\sum_{i<j}\lambda_i\lambda_j\left(\lambda_i-\lambda_j\right)^2.
\end{eqnarray}
Clearly, the above equation is zero if and only if every non-zero eigenvalue is the same. Next, let us show that $2\Leftrightarrow 4$. Let $\rho$ have spectral decomposition
\begin{equation}
\rho=\sum_i\lambda_i|\varphi_i\rangle\langle\varphi_i|.
\end{equation}
Then,
\begin{eqnarray}
\rho^2-\tr(\rho^2)\rho&=&\sum_i\lambda_i^2|\varphi_i\rangle\langle\varphi_i|-\sum_j\lambda_j^2\sum_k\lambda_k|\varphi_k\rangle\langle\varphi_k|\\
&=&\sum_{i,j}\left(\lambda_i^2\lambda_j-\lambda_j^2\lambda_i\right)|\varphi_i\rangle\langle\varphi_i|\\
&=&\sum_{i,j}\lambda_i\lambda_j(\lambda_i-\lambda_j)|\varphi_i\rangle\langle\varphi_i|.
\end{eqnarray}
Clearly, this is zero if and only if every non-zero eigenvalue is the same.
\end{proof}

\section{Convex combinations of flat states}
\label{Appendix: Convex combinations}

\begin{thm}
\label{Appendix th:convex}
Let $\rho_1, \rho_2\in \mathcal{D(H)}$ be two flat states and $\rho_1\neq\rho_2$ and let $0<p<1$. Then, the state
\begin{equation}
\sigma=p\rho_1+(1-p)\rho_2
\end{equation}
is flat if and only if
\begin{equation}
\rho_1\rho_2=0,
\end{equation}
(i.e., $\rho_1$ and $\rho_2$ have orthogonal support) and
\begin{equation}
p\tr(\rho_1^2)-(1-p)\tr(\rho_2^2)=0.
\end{equation}
\end{thm}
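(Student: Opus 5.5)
The plan is to write $\rho_1=P_1/r_1$ and $\rho_2=P_2/r_2$, where $P_1,P_2$ are projectors of ranks $r_1,r_2$, so that $\tr(\rho_k^2)=1/r_k$. Then I will use the characterization $\sigma^2=\tr(\sigma^2)\sigma$ from the Lemma in Appendix \ref{Appendix: Flat properties}.

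\emph{The ``if'' direction} is immediate. If $P_1P_2=0$, the spectrum of $\sigma$ consists of $p/r_1$ with multiplicity $r_1$, $(1-p)/r_2$ with multiplicity $r_2$, and zeros. The condition $p\tr(\rho_1^2)=(1-p)\tr(\rho_2^2)$ says exactly $p/r_1=(1-p)/r_2$, so $\sigma$ is flat.

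\emph{For the ``only if'' direction}, suppose $\sigma$ is flat with non-zero eigenvalue $\mu$, so $\sigma=\mu Q$ for a projector $Q$.

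The first step is to show that $Q$ dominates both projectors. Since $\sigma\geq p\rho_1\geq 0$, we have $\mathrm{supp}(\rho_1)\subseteq\mathrm{supp}(\sigma)$, hence $P_1\leq Q$, and likewise $P_2\leq Q$.

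The second step is to compress to $P_1$. Since $QP_1=P_1$, we get
\begin{equation}
P_1\sigma P_1=\mu P_1 .
\end{equation}
Expanding $\sigma$ instead gives
\begin{equation}
P_1\sigma P_1=\frac{p}{r_1}P_1+\frac{1-p}{r_2}P_1P_2P_1 .
\end{equation}
Therefore $P_1P_2P_1=cP_1$ with $c=\bigl(\mu-p/r_1\bigr)r_2/(1-p)\geq 0$. Symmetrically, $P_2P_1P_2=c'P_2$.

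The third step, which I expect to be the main obstacle, is to show $c=0$ unless $\rho_1=\rho_2$. The relation $P_1P_2P_1=cP_1$ means that every vector in $\mathrm{range}(P_1)$ has $\|P_2v\|^2=c\|v\|^2$.

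I would combine this with a trace identity. On one hand $\tr(P_1P_2P_1)=cr_1$; on the other hand $\tr(P_1P_2P_1)=\tr(P_2P_1P_2)=c'r_2$.

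Next I would use $\sigma^2=\mu\sigma$ and look at the $P_1\cdot P_1$ block again. Compute $P_1\sigma^2P_1=P_1\sigma Q\sigma P_1$. Since $\sigma$ commutes with $Q$ and $Q\sigma=\sigma$, this gives
\begin{equation}
P_1\sigma^2P_1=\mu P_1\sigma P_1=\mu^2P_1 .
\end{equation}
Alternatively, I would expand $\sigma^2$ directly. Using $P_1P_2P_1=cP_1$ and $P_1P_2P_2P_1=cP_1$, this expansion yields an expression in $c$ equal to
\begin{equation}
\Bigl[(p/r_1)^2+2\tfrac{p(1-p)}{r_1r_2}c+\bigl(\tfrac{1-p}{r_2}\bigr)^2c\Bigr]P_1 .
\end{equation}
Equating this with $\mu^2P_1$, where $\mu=p/r_1+(1-p)c/r_2$, gives
\begin{equation}
\bigl(\tfrac{1-p}{r_2}\bigr)^2(c-c^2)=0 ,
\end{equation}
so $c\in\{0,1\}$.

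If $c=1$, every unit vector in $\mathrm{range}(P_1)$ lies in $\mathrm{range}(P_2)$, so $P_1\leq P_2$. The symmetric argument gives $c'\in\{0,1\}$, and the trace identity $cr_1=c'r_2$ then forces $c'=1$ whenever $c=1$. Hence $P_2\leq P_1$ as well, so $P_1=P_2$ and $\rho_1=\rho_2$, which is excluded.

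Therefore $c=0$, i.e.\ $P_1P_2P_1=0$, which gives $P_2P_1=0$ and hence $\rho_1\rho_2=0$. With orthogonality established, the eigenvalues of $\sigma$ are $p/r_1$ and $(1-p)/r_2$, and flatness forces them to be equal. This is exactly the stated condition on $p$.
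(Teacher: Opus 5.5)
Your proof is correct, and it takes a genuinely different route from the paper's.

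\textbf{The paper's route.} The paper works only with scalar invariants. Using $\rho_k^2=\tr(\rho_k^2)\rho_k$, it writes the flatness criterion $\tr(\sigma^3)-\tr(\sigma^2)^2$ as a quadratic in the single overlap $t=\tr(\rho_1\rho_2)$, with coefficients depending on $p$, $r=\tr(\rho_1^2)$ and $s=\tr(\rho_2^2)$, and solves it explicitly. It then combines the a priori bounds $0\le t\le\min(r,s)$ with two rewritings of the discriminant, each as a square plus a nonnegative term. This shows that the minus root is admissible only when $pr=(1-p)s$, giving $t=0$. The plus root is admissible only when $t=r=s$, which forces $\rho_1=\rho_2$.

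\textbf{Your route.} You argue at the operator level instead. Support domination $P_k\le Q$ lets you compress both $\sigma=\mu Q$ and $\sigma^2=\mu\sigma$ to $\mathrm{range}(P_1)$. The first compression gives $P_1P_2P_1=cP_1$, and the second gives $c=c^2$. So the two supports are either orthogonal or nested, and you exclude the nested case with the trace identity $cr_1=c'r_2$. That last step could also be done directly: if $P_1\le P_2$, the spectrum of $\sigma$ contains $p/r_1+(1-p)/r_2$ on $\mathrm{range}(P_1)$ and $(1-p)/r_2$ on the rest of $\mathrm{range}(P_2)$, so flatness forces $P_1=P_2$.

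\textbf{What each approach buys.} Your argument avoids the quadratic formula and the rather opaque discriminant identities. It also exposes the geometric mechanism: every principal angle between the two supports must have $\cos^2$ equal to the same $c\in\{0,1\}$, and flatness rules out proper nesting. The paper's argument needs nothing beyond the scalar criterion from its Appendix~B lemma and the number $\tr(\rho_1\rho_2)$, but it pays for this with considerably heavier algebra.
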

\begin{proof}
Let us start with the slightly easier $\Leftarrow$ case so we will assume
\begin{equation}
\rho_1\rho_2=0,\qquad\qquad p\tr(\rho_1^2)-(1-p)\tr(\rho_2^2)=0.
\end{equation}
Then,
\begin{equation}
\sigma=p\rho_1+(1-p)\rho_2=\frac{\tr(\rho_2^2)\rho_1+\tr(\rho_1^2)\rho_2}{\tr(\rho_1^2)+\tr(\rho_2^2)}.
\end{equation}
Since $\rho_1$ and $\rho_2$ have orthogonal support, the only non-zero eigenvalue of $\sigma$ will be
\begin{equation}
\tr(\sigma^2)=\frac{\tr(\rho_1^2)\tr(\rho_2^2)}{\tr(\rho_1^2)+\tr(\rho_2^2)}
\end{equation}
and $\sigma$ will be a flat state.

Let us then focus on the $\Rightarrow$. Let us first show that for flat $\rho_1, \rho_2$,
\begin{equation}
0\leq\tr(\rho_1\rho_2)\leq\min(\tr(\rho_1^2),\tr(\rho_2^2)).
\end{equation}
The first inequality is easily shown true for any positive operators. For the second, let $|i\rangle$ be the eigenvectors of $\rho_1$ corresponding to the non-zero eigenvalue. Then
\begin{equation}
\tr(\rho_1\rho_2)=\tr(\rho_1^2)\sum_i\langle i|\rho_2|i\rangle\leq\tr(\rho_1^2)\tr(\rho_2)=\tr(\rho_1^2)
\end{equation}
and similarly for $\rho_2$, so clearly we also have $\tr(\rho_1\rho_2)\leq\min(\tr(\rho_1^2),\tr(\rho_2^2))$.

Now, a state $\sigma$ is flat if and only if $\mathcal{F}(\sigma)=\tr(\sigma^3)-\tr^2(\sigma^2)=0$. Let us write $r\equiv\tr(\rho_1^2)$ and $s\equiv\tr(\rho_2^2)$ and compute $\mathcal{F}(\sigma)$ for $\sigma=p\rho_1+(1-p)\rho_2$:
\begin{eqnarray}
\mathcal{F}(\sigma)&=&p(1-p)\left(pr-[1-p]s\right)^2\nonumber\\
&&+p(1-p)\left(3pr+3[1-p]s-4p^2r-4[1-p]^2s\right)\tr(\rho_1\rho_2)\nonumber\\
&&-4p^2(1-p)^2\tr^2(\rho_1\rho_2).
\end{eqnarray}
Setting this to zero and solving for $\tr(\rho_1\rho_2)$ gives us
\begin{eqnarray}
\label{Appendix eq:trace12}
\tr(\rho_1\rho_2)&=&\frac{(3p-4p^2)r+\left(3[1-p]-4[1-p]^2\right)s}{8p(1-p)}\nonumber\\
&&\pm \frac{\sqrt{(-8p^3+9p^2)r^2+\left(-8[1-p]^3+9[1-p]^2\right)s^2-6p(1-p)rs}}{8p(1-p)}
\end{eqnarray}
Now, we know that $0\leq \tr(\rho_1\rho_2)\leq \min(r,s)$. We can write the radicand above as
\begin{eqnarray}
&&(-8p^3+9p^2)r^2+\left(-8[1-p]^3+9[1-p]^2\right)s^2-6p(1-p)rs\nonumber\\
&=&\left[(3p-4p^2)r+\left(3[1-p]-4[1-p]^2\right)s\right]^2+16p(1-p)\left[pr-(1-p)s\right]^2
\end{eqnarray}
so clearly the solution of (\ref{Appendix eq:trace12}) with the minus sign can only satisfy $\tr(\rho_1\rho_2)\geq 0$ if and only if $pr=(1-p)s$ which gives $\tr(\rho_1\rho_2)=0\Rightarrow \rho_1\rho_2=0$.

Without loss of generality, let us assume that $r\leq s$. For the solution of (\ref{Appendix eq:trace12}) with the plus sign, we will use the upper bound:
\begin{eqnarray}
0\leq r-\tr(\rho_1\rho_2)&=&\frac{(5p-4p^2)r-\left(3[1-p]-4[1-p]^2\right)s}{8p(1-p)}\nonumber\\
&&-\frac{\sqrt{(-8p^3+9p^2)r^2+\left(-8[1-p]^3+9[1-p]^2\right)s^2-6p(1-p)rs}}{8p(1-p)}.
\end{eqnarray}
Now, we can write the radicand as
\begin{eqnarray}
&&(-8p^3+9p^2)r^2+\left(-8[1-p]^3+9[1-p]^2\right)s^2-6p(1-p)rs\nonumber\\
&=&\left[(5p-4p^2)r-\left(3[1-p]-4[1-p]^2\right)s\right]^2+16p(1-p)^2\left[pr+(1-p)s\right](s-r).
\end{eqnarray}
The bound can now be satisfied only if $s=r$, which would give $\tr(\rho_1\rho_2)=r$ implying $\rho_1=\rho_2$, contradicting our assumption. Therefore, the only solution is the one with $pr=(1-p)s$ and $\rho_1\rho_2=0$.
\end{proof}

\section{Flatness-Preserving Operations are either constant or unitary}
\label{Appendix: FPO}
Here, a Flatness-Preserving Operation (FPO) means a completely positive trace-preserving (CPTP) map $\mathcal{E}:\mathcal{D(H}_1)\rightarrow\mathcal{D(H}_2)$ such that for any flat $\rho\in\mathcal{D(H}_1)$ also $\mathcal{E}(\rho)$ is flat.
\begin{lemma}
\label{Appendix th:orthogonals}
Let quantum channel $\mathcal{E:D(H}_1)\rightarrow\mathcal{D(H}_2)$ be an FPO and $\rho_1, \rho_2\in\mathcal{D(H}_1)$ some flat states orthogonal to each other. Then either $\mathcal{E}(\rho_1)=\mathcal{E}(\rho_2)$ or we have both 
\begin{align}
\mathcal{E}(\rho_1)\mathcal{E}(\rho_2)&=0 &&\text{and} &\frac{\tr(\mathcal{E}(\rho_1)^2)}{\tr(\mathcal{E}(\rho_2)^2)}&=\frac{\tr(\rho_1^2)}{\tr(\rho_2^2)}.
\end{align}
\end{lemma}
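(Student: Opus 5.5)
The plan is to combine linearity of $\mathcal{E}$ with Theorem \ref{th:convex}. Since $\rho_1\rho_2=0$ and both are flat, Theorem \ref{th:convex} (the $\Leftarrow$ direction) says that the specific convex combination
\begin{equation}
\tau=p\rho_1+(1-p)\rho_2,\qquad p=\frac{\tr(\rho_2^2)}{\tr(\rho_1^2)+\tr(\rho_2^2)}\in(0,1),
\end{equation}
is flat; here $p$ is chosen so that $p\tr(\rho_1^2)=(1-p)\tr(\rho_2^2)$. Because $\mathcal{E}$ is an FPO, the three states $\mathcal{E}(\rho_1)$, $\mathcal{E}(\rho_2)$ and $\mathcal{E}(\tau)$ are all flat. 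By linearity, $\mathcal{E}(\tau)=p\,\mathcal{E}(\rho_1)+(1-p)\,\mathcal{E}(\rho_2)$.

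Next I split into cases. If $\mathcal{E}(\rho_1)=\mathcal{E}(\rho_2)$, we are in the first alternative and there is nothing to prove. Otherwise $\mathcal{E}(\rho_1)\neq\mathcal{E}(\rho_2)$ are two distinct flat states whose convex combination with weight $p\in(0,1)$ is flat. The $\Rightarrow$ direction of Theorem \ref{th:convex} then gives
\begin{equation}
\mathcal{E}(\rho_1)\mathcal{E}(\rho_2)=0,\qquad p\tr(\mathcal{E}(\rho_1)^2)=(1-p)\tr(\mathcal{E}(\rho_2)^2).
\end{equation}

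Finally, I compare the two weight conditions. Both give $\frac{1-p}{p}$: on one hand it equals $\tr(\rho_1^2)/\tr(\rho_2^2)$ by the choice of $p$, and on the other it equals $\tr(\mathcal{E}(\rho_1)^2)/\tr(\mathcal{E}(\rho_2)^2)$. Equating them yields the stated ratio identity. All purities are strictly positive, so the divisions are legitimate.

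There is no real obstacle, since everything reduces to Theorem \ref{th:convex}. The only points needing care are that $p$ lies strictly in $(0,1)$, which holds because the purities are positive, and that the hypothesis $\rho_1\neq\rho_2$ of Theorem \ref{th:convex} is met in the input case (orthogonal states are distinct) and in the output case (by the case split).
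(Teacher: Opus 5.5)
Your proposal is correct and follows the paper's proof essentially verbatim. You choose $p=\tr(\rho_2^2)/(\tr(\rho_1^2)+\tr(\rho_2^2))$ so that Theorem~\ref{th:convex} makes $p\rho_1+(1-p)\rho_2$ flat, push it through $\mathcal{E}$ by linearity, and apply the converse direction of Theorem~\ref{th:convex} to the images; your explicit checks that $p\in(0,1)$ and that the distinctness hypothesis holds in both applications are details the paper leaves implicit, and they are right.
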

\begin{proof}
Let $r=\tr(\rho_1^2)$ and $s=\tr(\rho_2^2)$ and let $p=\frac{s}{r+s}$. Then, Theorem \ref{Appendix th:convex} tells us that $\sigma=p\rho_1+(1-p)\rho_2$ is a flat state meaning that $\mathcal{E}(\sigma)$ must be a flat state as well. But since $\mathcal{E}(\sigma)=p\mathcal{E}(\rho_1)+(1-p)\mathcal{E}(\rho_2)$ we know from Theorem \ref{Appendix th:convex} that either $\mathcal{E}(\rho_1)=\mathcal{E}(\rho_2)$ or we have that $\mathcal{E}(\rho_1)\mathcal{E}(\rho_2)=0$ and $p\tr(\mathcal{E}(\rho_1)^2)-(1-p)\tr(\mathcal{E}(\rho_2)^2)=0$ from which the claim follows.
\end{proof}
\begin{lemma}
\label{Appendix th:twodim}
Let $\mathcal{H}_1$ be a 2-dimensional Hilbert space with basis states $|0\rangle$ and $|1\rangle$ and let $\mathcal{E:D(H}_1)\rightarrow\mathcal{D(H}_2)$ an FPO with 
\begin{equation}
\mathcal{E}(|0\rangle\langle 0|)=\mathcal{E}(|1\rangle\langle 1|)=\sigma\in\mathcal{D(H}_2).
\end{equation}
Then,
\begin{equation}
\mathcal{E}(\rho)=\sigma\quad\forall \rho\in\mathcal{D(H}_1).
\end{equation}
\end{lemma}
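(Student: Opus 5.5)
The plan is to parametrize $\mathcal{E}$ on the qubit by its action on the Pauli basis and to show that the hypothesis forces every component except the identity one to vanish. By linearity, for $\rho=\frac{1}{2}(\mathbb{I}+xX+yY+zZ)$ we have
\begin{equation}
\mathcal{E}(\rho)=\mathcal{E}(\mathbb{I}/2)+\tfrac{1}{2}\left(x\,\mathcal{E}(X)+y\,\mathcal{E}(Y)+z\,\mathcal{E}(Z)\right).
\end{equation}
The hypothesis $\mathcal{E}(|0\rangle\langle 0|)=\mathcal{E}(|1\rangle\langle 1|)$ gives $\mathcal{E}(Z)=0$, and it also gives $\mathcal{E}(\mathbb{I}/2)=\sigma$. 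Moreover $\sigma$ is flat, being the image of the flat state $\mathbb{I}/2$ (or of $|0\rangle\langle0|$). Define the Hermitian, traceless operators $D(\theta)=\frac{1}{2}(\cos\theta\,\mathcal{E}(X)+\sin\theta\,\mathcal{E}(Y))$. Then the pure state with Bloch vector $(\sin\phi\cos\theta,\sin\phi\sin\theta,\cos\phi)$ is mapped to $\sigma+\sin\phi\, D(\theta)$. It suffices to show $D(\theta)=0$ for all $\theta$.

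Fix $\theta$ and consider the orthogonal equatorial pure states with Bloch vectors $\pm(\cos\theta,\sin\theta,0)$. Their images are $\sigma\pm D(\theta)$. By Lemma \ref{Appendix th:orthogonals}, there are two cases.
\begin{enumerate}
\item The images are equal, which gives $D(\theta)=0$ immediately.
\item The images $\sigma+D$ and $\sigma-D$ are flat, have orthogonal support, and have equal purity.
\end{enumerate}
In the second case, since $\operatorname{tr}(\sigma\pm D)=1$, equal purity together with flatness means both have the same nonzero eigenvalue $c$ and the same rank. So we can write $\sigma+D=cP$ and $\sigma-D=cQ$ with $P,Q$ mutually orthogonal projectors of equal, nonzero rank. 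Hence $\sigma=\frac{c}{2}(P+Q)$ and $D=\frac{c}{2}(P-Q)$.

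Next use the non-equatorial pure states, all of which are flat, so their images must be flat. Taking $\sin\phi=t\in(0,1)$, say $t=\frac12$, the image is
\begin{equation}
\sigma+tD=\tfrac{c}{2}(1+t)P+\tfrac{c}{2}(1-t)Q.
\end{equation}
Because $P$ and $Q$ are both nonzero and orthogonal, this operator has two distinct nonzero eigenvalues $\frac{c}{2}(1\pm t)$, so it is not flat. This contradicts the FPO property, so the second case cannot occur. Therefore $D(\theta)=0$ for every $\theta$, which gives $\mathcal{E}(X)=\mathcal{E}(Y)=0$ and hence $\mathcal{E}(\rho)=\sigma$ for all $\rho$.

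The main obstacle is the middle step: extracting from Lemma \ref{Appendix th:orthogonals} the precise structure $\sigma\pm D=cP,\,cQ$ with $P$ and $Q$ of equal rank. This needs the trace normalization and the equal-purity condition to be combined correctly, so that both projectors are genuinely nonzero. Once that structure is in hand, the interpolation argument through non-equatorial pure states closes the proof directly.
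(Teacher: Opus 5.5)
Your proof is correct and follows essentially the paper's argument. Like the paper, you use $\mathcal{E}(Z)=0$ to see that a non-equatorial pure state is mapped to a nontrivial convex combination of the images of two antipodal equatorial states, and you then use the convex-combination characterization of flat states to rule out $\mathcal{E}(X)\neq 0$ and $\mathcal{E}(Y)\neq 0$. The only difference is cosmetic: you go through Lemma \ref{Appendix th:orthogonals} and compute the eigenvalues explicitly at a single interior point, whereas the paper varies $\theta$ and invokes the uniqueness of the admissible weight in Theorem \ref{Appendix th:convex}.
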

\begin{proof}
Let $X, Y, Z$ be the Pauli matrices. Clearly $\mathcal{E}(\mathbb{I}/2)=\sigma$ and $\mathcal{E}(Z)=0$. Let us assume that $\mathcal{E}(X)\neq 0$. Then,
\begin{equation}
\mathcal{E}\left(\frac{\mathbb{I}+\cos\theta X+\sin\theta Z}{2}\right)=\frac{1+\cos\theta}{2}\mathcal{E}\left(\frac{\mathbb{I}+X}{2}\right)+\frac{1-\cos\theta}{2}\mathcal{E}\left(\frac{\mathbb{I}-X}{2}\right)
\end{equation}
should be a flat state for any value of $\theta$ but this contradicts Theorem \ref{Appendix th:convex}. Therefore, $\mathcal{E}(X)=0$ and similarly $\mathcal{E}(Y)=0$ and $\mathcal{E}(\rho)=\sigma$ for all $\rho\in\mathcal{D(H}_1)$.
\end{proof}
\begin{lemma}
Let $\mathcal{H}_1$ be a n-dimensional Hilbert space with a basis $|i\rangle$, $i\in\{0,\dots,n-1\}$ and let $\mathcal{E:D(H}_1)\rightarrow\mathcal{D(H}_2)$ an FPO with 
\begin{equation}
\mathcal{E}(|0\rangle\langle 0|)=\mathcal{E}(|1\rangle\langle 1|)=\sigma\in\mathcal{D(H}_2).
\end{equation}
Then,
\begin{equation}
\mathcal{E}(|i\rangle\langle i|)=\sigma\quad\forall i\in\{0,\dots,n-1\}.
\end{equation}
\end{lemma}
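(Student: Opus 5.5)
The plan is to propagate the equality $\mathcal{E}(|0\rangle\langle 0|)=\mathcal{E}(|1\rangle\langle 1|)=\sigma$ to the remaining basis states. The tools are Lemma \ref{Appendix th:twodim}, applied to two-dimensional subspaces, and Lemma \ref{Appendix th:orthogonals}, applied to suitable pairs of orthogonal flat states. Fix $i\geq 2$. For the subspace $\mathrm{span}\{|0\rangle,|1\rangle\}$ we restrict $\mathcal{E}$ to states supported on it. This restriction is still a CPTP FPO on that two-dimensional space, since flat states of the subspace are flat states of $\mathcal{H}_1$. Lemma \ref{Appendix th:twodim} then gives $\mathcal{E}(\rho)=\sigma$ for every $\rho$ supported on $\mathrm{span}\{|0\rangle,|1\rangle\}$. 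In particular,
\begin{equation}
\mathcal{E}(|+\rangle\langle +|)=\sigma, \qquad \mathcal{E}(\mathbb{I}_{01}/2)=\sigma,
\end{equation}
where $|+\rangle=(|0\rangle+|1\rangle)/\sqrt{2}$ and $\mathbb{I}_{01}=|0\rangle\langle 0|+|1\rangle\langle 1|$.

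Now compare $|i\rangle\langle i|$ with states in this subspace using Lemma \ref{Appendix th:orthogonals}. The pure states $|0\rangle\langle 0|$ and $|i\rangle\langle i|$ are orthogonal flat states of equal purity. So either $\mathcal{E}(|i\rangle\langle i|)=\sigma$, in which case we are done, or $\sigma\,\mathcal{E}(|i\rangle\langle i|)=0$ with $\tr(\mathcal{E}(|i\rangle\langle i|)^2)=\tr(\sigma^2)$. Assume the second alternative and seek a contradiction.

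For the contradiction, take the flat states $\mathbb{I}_{01}/2$ and $|i\rangle\langle i|$. They are orthogonal, with purities $1/2$ and $1$, and both $\mathbb{I}_{01}/2$ and $|0\rangle\langle 0|$ map to $\sigma$. Applying Lemma \ref{Appendix th:orthogonals} to this pair, the first alternative $\mathcal{E}(|i\rangle\langle i|)=\sigma$ is excluded by assumption. Hence the second alternative must hold, which forces
\begin{equation}
\frac{\tr(\sigma^2)}{\tr(\mathcal{E}(|i\rangle\langle i|)^2)}=\frac{1/2}{1}.
\end{equation}
But the first application of Lemma \ref{Appendix th:orthogonals} gave this ratio as $1$, a contradiction. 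Therefore $\mathcal{E}(|i\rangle\langle i|)=\sigma$ for every $i$.

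The main point to verify is the restriction argument used to invoke Lemma \ref{Appendix th:twodim}. That lemma is stated for a channel whose domain is two-dimensional. We have to check that $\mathcal{E}$ composed with the embedding of $\mathcal{D}(\mathrm{span}\{|0\rangle,|1\rangle\})$ into $\mathcal{D}(\mathcal{H}_1)$ is a CPTP map that preserves flatness; both properties are immediate. Once that is granted, the purity-ratio mismatch argument above is routine. In fact, Lemma \ref{Appendix th:twodim} is needed only to obtain $\mathcal{E}(\mathbb{I}_{01}/2)=\sigma$, and that already follows from linearity, since $\mathbb{I}_{01}/2=\tfrac12|0\rangle\langle 0|+\tfrac12|1\rangle\langle 1|$. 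So even that step can be bypassed.
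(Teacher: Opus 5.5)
Your proof is correct and matches the paper's argument: you apply Lemma~\ref{Appendix th:orthogonals} to the pairs $(|0\rangle\langle 0|,|i\rangle\langle i|)$ and $(\mathbb{I}_{01}/2,|i\rangle\langle i|)$, and this forces $\tr(\sigma^2)/\tr(\mathcal{E}(|i\rangle\langle i|)^2)$ to equal both $1$ and $1/2$. As you note yourself, the detour through Lemma~\ref{Appendix th:twodim} is unnecessary, because linearity already gives $\mathcal{E}(\mathbb{I}_{01}/2)=\sigma$, and that is all the paper uses.
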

\begin{proof}
Let us assume that $\mathcal{E}(|j\rangle\langle j|)\neq \sigma$. Then, Lemma \ref{Appendix th:orthogonals} tells us that
\begin{equation}
\frac{\tr([|0\rangle\langle 0|]^2)}{\tr([|j\rangle\langle j|]^2)}=\frac{\tr(\sigma^2)}{\tr(\mathcal{E}(|j\rangle\langle j|)^2)}=\frac{\tr\left(\left[\frac{|0\rangle\langle 0|+|1\rangle\langle 1|}{2}\right]^2\right)}{\tr([|j\rangle\langle j|]^2)}.
\end{equation}
However, the left-hand side is 1 while the right hand side is 1/2, so this is clearly a contradiction and we must have $\mathcal{E}(|i\rangle\langle i|)=\sigma$ for all $i\in\{0,\dots,n-1\}$.
\end{proof}
\begin{lemma}
Let $\mathcal{H}_1$ be a Hilbert space with two orthogonal subspaces $S_1$ and $S_2$. Let $\mathcal{E:D(H}_1)\rightarrow\mathcal{D(H}_2)$ be an FPO with
\begin{equation}
\mathcal{E}(\rho_1)=\mathcal{E}(\rho_2)=\sigma\quad\forall\rho_1\in\mathcal{D}(S_1)\ \forall\rho_2\in\mathcal{D}(S_2).
\end{equation}
Then
\begin{equation}
\mathcal{E}(\rho)=\sigma\quad\forall\rho\in\mathcal{D}(S_1\oplus S_2).
\end{equation}
\end{lemma}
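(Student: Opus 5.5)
By linearity, it suffices to show that $\mathcal{E}(|\psi\rangle\langle\psi|)=\sigma$ for every unit vector $|\psi\rangle\in S_1\oplus S_2$. The reason is that every density operator on $S_1\oplus S_2$ is a convex combination of such pure states.

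Write $|\psi\rangle=\alpha|a\rangle+\beta|b\rangle$ with unit vectors $|a\rangle\in S_1$, $|b\rangle\in S_2$ and $|\alpha|^2+|\beta|^2=1$. If $\alpha=0$ or $\beta=0$, the claim is the hypothesis, so assume both are non-zero. The vector $|\psi\rangle$ lies in the two-dimensional subspace $T=\mathrm{span}\{|a\rangle,|b\rangle\}$, in which $|a\rangle$ and $|b\rangle$ form an orthonormal basis.

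Next, restrict $\mathcal{E}$ to $\mathcal{D}(T)$. Let $\iota:\mathcal{D}(T)\rightarrow\mathcal{D(H}_1)$ denote the isometric embedding $\rho\mapsto\rho$, viewing $T\subset\mathcal{H}_1$. The composition $\mathcal{E}\circ\iota$ is CPTP, and it is an FPO because a flat state on $T$ remains flat (proportional to a projector) as an operator on $\mathcal{H}_1$. By hypothesis,
\begin{equation}
\mathcal{E}\circ\iota(|a\rangle\langle a|)=\mathcal{E}\circ\iota(|b\rangle\langle b|)=\sigma,
\end{equation}
since $|a\rangle\langle a|\in\mathcal{D}(S_1)$ and $|b\rangle\langle b|\in\mathcal{D}(S_2)$.

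Lemma \ref{Appendix th:twodim}, applied to the two-dimensional space $T$ with basis $|a\rangle,|b\rangle$, now gives $\mathcal{E}\circ\iota(\rho)=\sigma$ for all $\rho\in\mathcal{D}(T)$. In particular $\mathcal{E}(|\psi\rangle\langle\psi|)=\sigma$. Since $|\psi\rangle$ was arbitrary, linearity extends this to all of $\mathcal{D}(S_1\oplus S_2)$.

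The argument therefore reduces to two observations, and these are the only points that need care. First, an arbitrary vector of $S_1\oplus S_2$ always lies in the span of one vector from $S_1$ and one from $S_2$. Second, restricting an FPO to the states supported on a subspace yields an FPO on that subspace, so that Lemma \ref{Appendix th:twodim} applies to $T$ even though it is stated for a channel whose whole input space is two-dimensional.
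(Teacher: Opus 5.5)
Your proof is correct and is essentially the paper's argument: split a pure state into its normalized projections onto $S_1$ and $S_2$, apply the two-dimensional lemma on their span, and extend to mixed states by linearity. Your explicit check that the restriction $\mathcal{E}\circ\iota$ to $\mathcal{D}(T)$ is again a CPTP FPO spells out a step the paper uses without stating it.
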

\begin{proof}
Let us pick some state $|\varphi\rangle\in S_1\oplus S_2$, $|\varphi\rangle\not\in S_1\cup S_2$. The projection of $|\varphi\rangle$ onto $S_1$ and the projection of $|\varphi\rangle$ onto $S_2$ form an orthogonal basis for a two-dimensional subspace and Lemma \ref{Appendix th:twodim} tells us that $\mathcal{E}(|\varphi\rangle\langle\varphi|)=\sigma$. Therefore, the claim holds for all pure states, so it must also hold for all mixed states.
\end{proof}
\begin{corollary}
Let $\mathcal{E:D(H}_1)\rightarrow\mathcal{D(H}_2)$ be an FPO and $\rho_1, \rho_2\in\mathcal{D(H}_1)$ flat states orthogonal to each other and $\mathcal{E}(\rho_1)=\mathcal{E}(\rho_2)=\sigma$. Then
\begin{equation}
\mathcal{E}(\rho)=\sigma\quad\forall\rho\in\mathcal{D(H}_1).
\end{equation}
\end{corollary}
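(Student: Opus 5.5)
Let $P_1$ and $P_2$ denote the projectors onto the supports of $\rho_1$ and $\rho_2$. Since $\rho_1\rho_2=0$, these supports are orthogonal subspaces $S_1$ and $S_2$ of $\mathcal{H}_1$. The plan has three steps: first show that $\mathcal{E}$ equals $\sigma$ on every state supported in $S_1\oplus S_2$, then enlarge this subspace to all of $\mathcal{H}_1$, and finally conclude by linearity.

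\emph{Step 1.} I want $\mathcal{E}(\tau)=\sigma$ for every $\tau\in\mathcal{D}(S_1)$ and every $\tau\in\mathcal{D}(S_2)$, so that the previous lemma applies.
\begin{itemize}
\item Choose an orthonormal basis $\{|a\rangle\}$ of $S_1$ and $\{|b\rangle\}$ of $S_2$, so $\rho_1=P_1/d_1$ and $\rho_2=P_2/d_2$.
\item Each basis projector $|a\rangle\langle a|$ is flat and orthogonal to $\rho_2$. By Lemma \ref{Appendix th:orthogonals}, either $\mathcal{E}(|a\rangle\langle a|)=\mathcal{E}(\rho_2)=\sigma$, or $\mathcal{E}(|a\rangle\langle a|)\sigma=0$ together with the purity ratio condition
\begin{equation}
\frac{\tr(\mathcal{E}(|a\rangle\langle a|)^2)}{\tr(\sigma^2)}=d_2 .
\end{equation}
\item Suppose some $|a\rangle$ falls in the second alternative. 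Then $\mathcal{E}(|a\rangle\langle a|)$ is orthogonal to $\sigma=\mathcal{E}(\rho_1)$. But $\mathcal{E}(\rho_1)$ is the average of the outputs $\mathcal{E}(|a'\rangle\langle a'|)$ over the basis of $S_1$, and all these are positive operators. Hence $\tr\big(\mathcal{E}(|a\rangle\langle a|)\,\mathcal{E}(\rho_1)\big)\ge \tfrac{1}{d_1}\tr\big(\mathcal{E}(|a\rangle\langle a|)^2\big)>0$, a contradiction.
\item Therefore $\mathcal{E}(|a\rangle\langle a|)=\sigma$ for every $a$. Since the basis of $S_1$ was arbitrary, this holds for every pure state in $S_1$, hence for all of $\mathcal{D}(S_1)$. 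The same argument gives the result on $S_2$.
\item The previous lemma now yields $\mathcal{E}=\sigma$ on $\mathcal{D}(S_1\oplus S_2)$.
\end{itemize}

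\emph{Step 2.} Let $T=S_1\oplus S_2$. If $T\neq\mathcal{H}_1$, take any unit vector $|c\rangle\perp T$.
\begin{itemize}
\item Pick two orthonormal vectors $|u\rangle,|v\rangle\in T$. Step 1 gives $\mathcal{E}(|u\rangle\langle u|)=\mathcal{E}(|v\rangle\langle v|)=\sigma$.
\item Apply the earlier $n$-dimensional lemma on the subspace spanned by $|u\rangle,|v\rangle,|c\rangle$, using the restriction of $\mathcal{E}$, which is still an FPO. This gives $\mathcal{E}(|c\rangle\langle c|)=\sigma$.
\item Doing this for each vector of an orthonormal basis of $T^\perp$ shows that $\mathcal{E}=\sigma$ on $\mathcal{D}(T^\perp)$, by the same basis-independence argument as in Step 1.
\item The previous lemma, applied to $T$ and $T^\perp$, gives $\mathcal{E}=\sigma$ on all pure states of $\mathcal{H}_1$. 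By linearity, $\mathcal{E}=\sigma$ on $\mathcal{D}(\mathcal{H}_1)$.
\end{itemize}

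The main obstacle is the exclusion in Step 1 of the orthogonal branch of Lemma \ref{Appendix th:orthogonals} when going from the mixed state $\rho_2$ to pure states inside $S_1$. The positivity argument above handles it: a nonzero positive operator cannot be orthogonal to an average that includes itself. Everything else is an application of the preceding lemmas.
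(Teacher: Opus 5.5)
Your proof is correct. The paper gives no separate proof of this corollary. It is presented as an immediate consequence of the preceding lemmas (Lemma \ref{Appendix th:twodim}, the $n$-dimensional lemma, and the subspace lemma), and your Step 2 is exactly that chain of reasoning.

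What you add is Step 1. Those lemmas apply directly only in two situations: when the two orthogonal states sent to $\sigma$ are pure (the $n$-dimensional lemma), or when $\mathcal{E}$ is already known to be constant on all of $\mathcal{D}(S_1)$ and $\mathcal{D}(S_2)$ (the subspace lemma). So for mixed flat $\rho_1,\rho_2$ a reduction is needed, and the paper leaves it unstated. You combine Lemma \ref{Appendix th:orthogonals} with the operator inequality $\sigma=\mathcal{E}(\rho_1)\geq\frac{1}{d_1}\mathcal{E}(|a\rangle\langle a|)$, which rules out the orthogonal branch, and this supplies the reduction cleanly.

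The paper's shortcut costs nothing downstream. Its later use of the corollary, in deducing that a non-constant FPO is an OPO, only needs the case of orthogonal pure states. Your Step 1 is what the corollary needs in the full generality in which it is stated.
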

\begin{corollary}
Let $\mathcal{E:D(H}_1)\rightarrow\mathcal{D(H}_2)$ be an FPO and $\rho_1, \rho_2\in\mathcal{D(H}_1)$ flat states orthogonal to each other and $\mathcal{E}(\rho_1)\neq\mathcal{E}(\rho_2)$. Then $\mathcal{E}$ preserves orthogonality of all pairs of states and preserves the proportions of purities for all pairs of flat states.
\end{corollary}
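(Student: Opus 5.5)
We need to prove the last corollary. It says that if $\mathcal{E}$ is an FPO and some pair of orthogonal flat states $\rho_1,\rho_2$ has $\mathcal{E}(\rho_1)\neq\mathcal{E}(\rho_2)$, then $\mathcal{E}$ preserves orthogonality of all pairs of states and preserves the ratio of purities for all pairs of flat states.

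\textbf{Step 1: no orthogonal flat pair can collide.} By the preceding corollary, if some orthogonal flat pair $\tau_1,\tau_2$ had $\mathcal{E}(\tau_1)=\mathcal{E}(\tau_2)=\sigma$, then $\mathcal{E}$ would be the constant map to $\sigma$. That contradicts $\mathcal{E}(\rho_1)\neq\mathcal{E}(\rho_2)$. Hence for every pair of mutually orthogonal flat states $\tau_1,\tau_2$ we have $\mathcal{E}(\tau_1)\neq\mathcal{E}(\tau_2)$. Lemma \ref{Appendix th:orthogonals} then gives, for every such pair,
\begin{equation}
\mathcal{E}(\tau_1)\mathcal{E}(\tau_2)=0 \quad\text{and}\quad \frac{\tr(\mathcal{E}(\tau_1)^2)}{\tr(\mathcal{E}(\tau_2)^2)}=\frac{\tr(\tau_1^2)}{\tr(\tau_2^2)}.
\end{equation}

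\textbf{Step 2: orthogonality for arbitrary states.} Let $\rho\sigma=0$ for general states $\rho,\sigma$. Spectrally decompose $\rho=\sum_i a_i|\varphi_i\rangle\langle\varphi_i|$ and $\sigma=\sum_j b_j|\psi_j\rangle\langle\psi_j|$. Because the supports are orthogonal, every $|\varphi_i\rangle$ is orthogonal to every $|\psi_j\rangle$. Pure states are flat, so Step 1 gives $\mathcal{E}(|\varphi_i\rangle\langle\varphi_i|)\,\mathcal{E}(|\psi_j\rangle\langle\psi_j|)=0$ for all $i,j$. Expanding by linearity yields $\mathcal{E}(\rho)\mathcal{E}(\sigma)=0$, so $\mathcal{E}$ is an OPO.

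\textbf{Step 3: purity ratios for arbitrary flat pairs.} This is the main point, since Step 1 only covers orthogonal pairs. It suffices to show that $c:=\tr(\mathcal{E}(\tau)^2)/\tr(\tau^2)$ is the same constant for every flat $\tau$.

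First, all pure states share one value $c_0=\tr(\mathcal{E}(|\varphi\rangle\langle\varphi|)^2)$. If $\varphi\perp\psi$, Step 1 gives equal purities directly. If $\varphi$ and $\psi$ are not orthogonal, choose a unit vector $\chi$ orthogonal to both when $\dim\mathcal{H}_1\ge3$, and compare each of them with $\chi$. Alternatively, and in any dimension, use Theorem \ref{Appendix th:OPO2}: since $\mathcal{E}$ is an OPO, $\mathcal{E}(\tau)=V(\tau\otimes\sigma_a)V^\dagger$. This gives $\tr(\mathcal{E}(\tau)^2)=\tr(\tau^2)\tr(\sigma_a^2)$ for every state $\tau$, so the ratio is the constant $\tr(\sigma_a^2)$. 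The isometry route is the cleanest and handles all cases, including dimension two, at once.

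With either route, every pair of flat states $\tau,\tau'$ satisfies
\begin{equation}
\frac{\tr(\mathcal{E}(\tau)^2)}{\tr(\mathcal{E}(\tau')^2)}=\frac{\tr(\tau^2)}{\tr(\tau'^2)}.
\end{equation}

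\textbf{Expected obstacle.} The only subtle point is extending the purity-ratio statement from orthogonal flat pairs to non-orthogonal ones. Routing through Theorem \ref{Appendix th:OPO2} settles it at once, because purity is multiplicative under tensoring with $\sigma_a$ and invariant under the isometry $V$.
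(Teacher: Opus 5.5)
Your proof is correct. Steps 1 and 2 are the paper's own argument; you make explicit the use of the preceding corollary to rule out colliding orthogonal pairs, which the paper leaves implicit.

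The purity claim is where your route differs. The paper stays inside the FPO appendix. By Lemma \ref{Appendix th:orthogonals}, orthogonal pure states have orthogonal images of equal purity. Decomposing $\mathbb{I}/d$ in an orthonormal basis that contains the eigenvectors of a flat $\tau$ then gives $\tr(\mathcal{E}(\tau)^2)=\tr(\tau^2)\,d\,\tr(\mathcal{E}(\mathbb{I}/d)^2)$. Because $\mathbb{I}/d$ is the same in every basis, this links non-orthogonal pure states in any dimension, including $2$.

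You instead feed the OPO property from Step 2 into Theorem \ref{Appendix th:OPO2} and read off $\tr(\mathcal{E}(\tau)^2)=\tr(\sigma_a^2)\tr(\tau^2)$. This is legitimate and not circular, since Appendix \ref{Appendix: OPO} never uses flatness. Your route is shorter, gives the constant explicitly, and shows the scaling holds for all states, not only flat ones. The cost is relying on the Stinespring/Schmidt structure theorem, whereas the paper's argument is elementary and self-contained.

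One small overstatement: your fallback via an auxiliary $\chi$ does not give the full result ``with either route.'' It fails for $\dim\mathcal{H}_1=2$. Even for $\dim\mathcal{H}_1\ge 3$ it only equalizes the purities of pure-state images. Passing to mixed flat states needs the extra step that the eigenvector images are mutually orthogonal, so their purities add. Your main argument does not rely on this fallback, so correctness is unaffected.
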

\begin{proof}
Orthogonal pure states are mapped to orthogonal flat states, so it follows immediately that arbitrary orthogonal states will stay orthogonal as well. Also, since the proportions of purity are preserved for orthogonal flat states, we have that
\begin{equation}
\frac{\tr\left(\left[\mathcal{E}\left(\frac{|0\rangle\langle 0|+|1\rangle\langle 1|}{2}\right)\right]^2\right)}{\tr(\mathcal{E}[|0\rangle\langle 0|]^2)}=\frac{\tr\left(\left[\mathcal{E}\left(\frac{|0\rangle\langle 0|}{2}\right)\right]^2+\left[\mathcal{E}\left(\frac{|1\rangle\langle 1|}{2}\right)\right]^2\right)}{\tr(\mathcal{E}[|0\rangle\langle 0|]^2)}=\frac{\tr\left(\left[\frac{|0\rangle\langle 0|}{2}\right]^2+\left[\frac{|1\rangle\langle 1|}{2}\right]^2\right)}{\tr([|0\rangle\langle 0|]^2)}=\frac{\tr\left(\left[\frac{|0\rangle\langle 0|+|1\rangle\langle 1|}{2}\right]^2\right)}{\tr([|0\rangle\langle 0|]^2)},
\end{equation}
and similarly it easily follows that the proportion of purity is preserved also for any flat state with the maximally mixed state, and thus it is preserved between any pair of flat states.
\end{proof}

\begin{thm}
\label{Appendix th:FPO}
A CPTP quantum channel $\mathcal{E:D(H}_1)\rightarrow\mathcal{D(H}_2)$ is an FPO if and only if either
\begin{enumerate}
\item $\mathcal{E}$ is a constant map to a flat state, i.e., there exists a flat state $\sigma\in \mathcal{D(H}_2)$ s.t. $\mathcal{E}(\rho)=\sigma\ \forall\rho\in\mathcal{D(H}_1)$ or
\item $\mathcal{E}(\rho)=V(\rho\otimes \sigma)V^\dagger$, for some flat state $\sigma$ in some ancillary Hilbert space $\mathcal{H}_a$ and an (injective but not necessarily bijective) isometry $V:\mathcal{H}_1\otimes\mathcal{H}_a\rightarrow \mathcal{H}_2$.
\end{enumerate}
\end{thm}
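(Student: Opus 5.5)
The plan has two parts: the easy direction "$\Leftarrow$" and the dichotomy "$\Rightarrow$" built from the corollaries just proved.

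\emph{Sufficiency.} A constant map to a flat state is trivially an FPO. For case 2, let $\rho=P/k$ and $\sigma=Q/m$ be flat, with $P,Q$ projectors of ranks $k,m$. Then $\rho\otimes\sigma=(P\otimes Q)/(km)$ is proportional to a projector. Since $V$ is an isometry, $V(P\otimes Q)V^\dagger$ is again a projector. Hence $\mathcal{E}(\rho)$ is flat.

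\emph{Necessity.} Let $\mathcal{E}$ be an FPO. If $\dim\mathcal{H}_1=1$, $\mathcal{E}$ is constant, and its single value is the image of the (flat) unique state, hence flat. Otherwise fix an orthonormal pair $|0\rangle,|1\rangle$ and split into two cases.
\begin{itemize}
\item If $\mathcal{E}(|0\rangle\langle0|)=\mathcal{E}(|1\rangle\langle1|)$, the first corollary gives $\mathcal{E}\equiv\sigma$. Here $\sigma$ is flat as the image of a pure (flat) state, which is case 1.
\item If the images differ, the second corollary says $\mathcal{E}$ is an OPO. Theorem \ref{Appendix th:OPO2} then yields $\mathcal{E}(\rho)=V(\rho\otimes\sigma)V^\dagger$ for some state $\sigma$ on $\mathcal{H}_a$ and isometry $V$. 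We may restrict $\mathcal{H}_a$ to the support of $\sigma$, as in that proof.
\end{itemize}

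The remaining step, which I expect to be the only real obstacle, is to show that this $\sigma$ is flat. Take any pure state $|\varphi\rangle$. Then
\[
\mathcal{E}(|\varphi\rangle\langle\varphi|)=V(|\varphi\rangle\langle\varphi|\otimes\sigma)V^\dagger,
\]
and since $V$ is an isometry, this operator has the same nonzero spectrum as $|\varphi\rangle\langle\varphi|\otimes\sigma$, which is exactly the nonzero spectrum of $\sigma$. The FPO property forces this image to be flat, so $\sigma$ has a single nonzero eigenvalue. Therefore $\sigma$ is flat, which is case 2.

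Pure states suffice for this step, so the purity-ratio statement in the second corollary is not even needed. It is also consistent with the conclusion: purities are multiplied by the common factor $\tr(\sigma^2)$, so their ratios are preserved. The key care point is checking that conjugation by an isometry preserves the nonzero spectrum, which holds because $V^\dagger V=\mathbb{I}$ makes $VAV^\dagger$ and $A$ share nonzero eigenvalues.
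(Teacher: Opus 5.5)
Your proof is correct and follows essentially the same route as the paper: the two corollaries give the dichotomy between a constant map and an OPO, Theorem~\ref{Appendix th:OPO2} supplies the form $V(\rho\otimes\sigma)V^\dagger$, and $\sigma$ must be flat because conjugation by an isometry preserves the nonzero spectrum of $\rho\otimes\sigma$; you specialize to pure $\rho$, which is what the paper's one-line eigenvalue remark amounts to. You also spell out the sufficiency direction and the $\dim\mathcal{H}_1=1$ case, both of which the paper leaves implicit.
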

\begin{proof}
We have already shown that an FPO is either a constant map or an OPO. We also know from Theorem \ref{Appendix th:OPO2} that an OPO must be of the form $\mathcal{E}(\rho)=V(\rho\otimes \sigma)V^\dagger$. The only thing left to show is that, for an FPO, $\sigma$ must be a flat state, but, since the eigenvalues of $\rho\otimes\sigma$ are products of eigenvalues of $\rho$ and $\sigma$, this follows trivially.
\end{proof}

\section{Equivalence of the Flatness-Preserving Operation definitions}
\label{Appendix: FPO equivalence}
Again, let Flatness-Preserving Operation (FPO) mean a completely positive trace-preserving (CPTP) map $\mathcal{E}:\mathcal{D(H}_1)\rightarrow\mathcal{D(H}_2)$ such that for any flat $\rho\in\mathcal{D(H}_1)$ also $\mathcal{E}(\rho)$ is flat.

As an alternative definition, we will call a CPTP map $\mathcal{E}:\mathcal{D(H}_A\otimes \mathcal{H}_B)\rightarrow\mathcal{D(H}_C\otimes\mathcal{H}_D)$ an FPO* if
\begin{equation}
\label{Appendix eq:FPO*}
\Delta_{\alpha\beta}\left(\tr_D[\mathcal{E}(\omega)]\right)\leq \Delta_{\alpha\beta}\left(\tr_B[\omega]\right)\qquad \forall\alpha<\beta\quad\forall\omega\in\mathcal{D(H}_A\otimes\mathcal{H}_B),
\end{equation}
where
\begin{equation}
\Delta_{\alpha\beta}(\rho)=S_\alpha(\rho)-S_\beta(\rho)\qquad 0\leq\alpha<\beta<\infty
\end{equation}
and $S_\alpha, S_\beta$ are Rényi entropies
\begin{equation}
S_\alpha(\rho)=\frac{1}{1-\alpha}\log\left(\tr(\rho^\alpha)\right).
\end{equation}
Note that, for a flat state $\rho$ we have
\begin{equation}
    S_\alpha(\rho)=-\log\left(\tr(\rho^2)\right)\qquad\Delta_{\alpha\beta}(\rho)=0\qquad\forall\alpha<\beta
\end{equation}
and for any state $\rho$
\begin{equation}
\Delta_{\alpha\beta}(\rho)\geq 0\qquad\forall\alpha<\beta,
\end{equation}
where the equality holds only for flat states.

Also, for any states $\rho, \sigma$ we have
\begin{equation}
S_\alpha(\rho\otimes\sigma)=S_\alpha(\rho)+S_\alpha(\sigma)\qquad \Delta_{\alpha\beta}(\rho\otimes\sigma)=\Delta_{\alpha\beta}(\rho)+\Delta_{\alpha\beta}(\sigma)\qquad\forall\alpha<\beta.
\end{equation}

\begin{thm}
Let $\mathcal{E}:\mathcal{D(H}_A\otimes \mathcal{H}_B)\rightarrow\mathcal{D(H}_C\otimes\mathcal{H}_D)$ be a CPTP quantum channel and let $\omega\in\mathcal{D(H}_A\otimes\mathcal{H}_B)$. Then, the following statements are equivalent:
\begin{enumerate}
    \item $\mathcal{E}$ is an FPO*,
    \item If $\tr_B(\omega)$ is flat then $\tr_D(\mathcal{E}(\omega))$ is flat (i.e., if $\omega$ is flat$_A$ then $\mathcal{E}(\omega)$ is flat$_C$),
    \item There exists an FPO $\mathcal{F}:\mathcal{D(H}_A)\rightarrow\mathcal{D(H}_C)$ such that
    \begin{equation}
    \mathcal{F}(\tr_B (\omega))=\tr_D(\mathcal{E}(\omega))\quad \forall\omega\in\mathcal{D(H}_A\otimes\mathcal{H}_B).
    \end{equation}
\end{enumerate}
\end{thm}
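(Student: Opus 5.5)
I will prove the cycle $1\Rightarrow 2\Rightarrow 3\Rightarrow 1$.

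\emph{Step $1\Rightarrow 2$.} This is immediate. If $\tr_B(\omega)$ is flat, then $\Delta_{\alpha\beta}(\tr_B\omega)=0$. The FPO* inequality together with $\Delta_{\alpha\beta}\geq 0$ then forces $\Delta_{\alpha\beta}(\tr_D\mathcal{E}(\omega))=0$, and by the remark preceding the theorem (equality holds only for flat states) $\tr_D\mathcal{E}(\omega)$ is flat.

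\emph{Step $2\Rightarrow 3$.} This is the main obstacle: we must show that $\tr_D\mathcal{E}(\omega)$ depends only on $\omega_A=\tr_B\omega$ (no signalling from $B$). Fix any state $\tau\in\mathcal{D(H}_A)$. I first handle $\omega=\rho\otimes\tau$ with $\rho$ flat. Then $\omega$ is flat$_A$, so the map
\[
\mathcal{F}_\tau(\rho):=\tr_D\mathcal{E}(\rho\otimes\tau)
\]
is an FPO $\mathcal{D(H}_A)\to\mathcal{D(H}_C)$. By Theorem \ref{Appendix th:FPO}, $\mathcal{F}_\tau$ is either constant or of the isometric form $V(\cdot\otimes\sigma)V^\dagger$.

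Next I compare different $B$-inputs. Take a pure $\omega=|\psi\rangle\langle\psi|$ on $AB$ whose reduced state $\omega_A$ is flat, for instance a maximally entangled state on a $k$-dimensional subspace of $\mathcal{H}_A$ tensored with part of $\mathcal{H}_B$, or a product state. Mixtures of the form $p\,\omega_1+(1-p)\,\omega_2$ with $\omega_{1,A}$ and $\omega_{2,A}$ orthogonal flat states, and weights given by Theorem \ref{Appendix th:convex}, are again flat$_A$. Applying Theorem \ref{Appendix th:convex} on the output side then yields orthogonality-or-equality relations analogous to Lemma \ref{Appendix th:orthogonals}.

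The key trick I would try is the following. For a fixed flat $\rho_A$, consider $\rho_A\otimes\tau_1$ and $\rho_A\otimes\tau_2$ with $\tau_1\perp\tau_2$. The mixture $\rho_A\otimes(q\tau_1+(1-q)\tau_2)$ is flat$_A$ for \emph{every} $q\in[0,1]$, not just the special weight. Hence the segment
\[
q\,\mathcal{F}_{\tau_1}(\rho_A)+(1-q)\,\mathcal{F}_{\tau_2}(\rho_A)
\]
consists entirely of flat states. By Theorem \ref{Appendix th:convex}, a flat convex combination of distinct flat states is possible only for one specific weight, so the outputs must be equal: $\mathcal{F}_{\tau_1}(\rho_A)=\mathcal{F}_{\tau_2}(\rho_A)$. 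Linearity in $\tau$ then extends this to all $\tau$, so $\mathcal{F}:=\mathcal{F}_\tau$ is independent of $\tau$ on flat inputs, and by linearity on all inputs, since flat (pure) states span.

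Entangled $\omega$ remain. I would decompose $\omega$ in an operator basis as $\omega=\sum_{k} M_k\otimes N_k$ and show that $\tr_D\mathcal{E}(X\otimes Y)=\tr(Y)\,\mathcal{F}(X)$ for all operators $X,Y$. For $X$ a flat state this holds because $\tau\mapsto\tr_D\mathcal{E}(X\otimes\tau)$ is constant on states, hence equals $\tr(\tau)\mathcal{F}(X)$ by linearity. It then extends to general $X$ by linearity, since flat (pure) states span all operators. Summing over $k$ gives $\tr_D\mathcal{E}(\omega)=\sum_k\tr(N_k)\mathcal{F}(M_k)=\mathcal{F}(\omega_A)$. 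This bypasses entanglement entirely, because linearity only needs product operators.

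\emph{Step $3\Rightarrow 1$.} By Theorem \ref{Appendix th:FPO}, $\mathcal{F}$ is either constant to a flat state, giving left-hand side $0\leq\Delta_{\alpha\beta}(\omega_A)$, or $\mathcal{F}(\rho)=V(\rho\otimes\sigma)V^\dagger$ with $\sigma$ flat. In the latter case the spectrum is preserved up to zeros under the isometry, and additivity gives
\[
\Delta_{\alpha\beta}(\mathcal{F}(\omega_A))=\Delta_{\alpha\beta}(\omega_A)+\Delta_{\alpha\beta}(\sigma)=\Delta_{\alpha\beta}(\omega_A).
\]
Hence inequality (\ref{Appendix eq:FPO*}) holds in both cases.
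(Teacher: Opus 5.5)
Your proof is correct. For $1\Rightarrow 2$, for $3\Rightarrow 1$, and for the final extension from product inputs to entangled inputs, it coincides with the paper's argument; your operator-basis decomposition is the paper's ``separable states span'' step.

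The genuine difference is in how you show that $\tr_D\mathcal{E}(\rho\otimes\tau)$ is independent of $\tau$ when $\rho$ is flat.
\begin{itemize}
\item The paper regards $\tau\mapsto\tr_D\mathcal{E}(\rho\otimes\tau)$ as an FPO on $\mathcal{D}(\mathcal{H}_B)$ that sends even non-flat $\tau$ to flat outputs. It then invokes the full classification, Theorem~\ref{Appendix th:FPO}: any map of the form $V(\cdot\otimes\sigma)V^\dagger$ preserves non-flatness, so this map must be constant.
\item You instead note that $\rho\otimes(q\tau_1+(1-q)\tau_2)$ is flat$_A$ for every $q\in[0,1]$, so an entire segment of outputs is flat. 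Theorem~\ref{Appendix th:convex} allows at most one flat interior point between distinct flat endpoints, because the weight is uniquely fixed by their purities. Hence the endpoints must coincide.
\end{itemize}
Your route is more elementary. The no-signalling half of $2\Rightarrow 3$ then rests only on the convex-combination theorem, not on the chain of lemmas behind the FPO classification. The paper's route is shorter to state only because the classification is already available.

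Some cosmetic points:
\begin{itemize}
\item The restriction $\tau_1\perp\tau_2$ is unnecessary. The segment argument works verbatim for any pair $\tau_1,\tau_2$, so the ``extend by linearity in $\tau$'' step can be dropped.
\item Two remarks play no role in the argument and can be cut: that each $\mathcal{F}_\tau$ is constant-or-isometric, and the discussion of entangled flat$_A$ mixtures.
\item $\tau$ should range over $\mathcal{D}(\mathcal{H}_B)$, not $\mathcal{D}(\mathcal{H}_A)$.
\end{itemize}
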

\begin{proof}
The implication $1\Rightarrow 2$ is trivial. Let us show $3\Rightarrow 1$. Let $\mathcal{E}:\mathcal{D(H}_A\otimes \mathcal{H}_B)\rightarrow\mathcal{D(H}_C\otimes\mathcal{H}_D)$ be a CPTP quantum channel and let there be an FPO $\mathcal{F}:\mathcal{D(H}_A)\rightarrow\mathcal{D(H}_C)$ such that
\begin{equation}
\mathcal{F}(\tr_B (\omega))=\tr_D(\mathcal{E}(\omega))\quad \forall\omega\in\mathcal{D(H}_A\otimes\mathcal{H}_B).
\end{equation}
Since $\mathcal{F}$ is an FPO, we know from Theorem \ref{Appendix th:FPO} that either $\mathcal{F}(\tr_B(\omega))$ is a fixed flat state for all $\omega$ or it has the same spectrum as $\tr_B(\omega)\otimes\sigma$ for some fixed flat state $\sigma$. In the first case,
\begin{equation}
\Delta_{\alpha\beta}(\tr_D[\mathcal{E}(\omega)])=\Delta_{\alpha\beta}(\mathcal{F}[\tr_B(\omega)])=0\qquad \forall\omega\in\mathcal{D(H}_A\otimes\mathcal{H}_B),\ \forall\alpha,\beta,
\end{equation}
which would make $\mathcal{E}$ an FPO*. In the second case
\begin{equation}
\Delta_{\alpha\beta}(\tr_D[\mathcal{E}(\omega)])=\Delta_{\alpha\beta}(\mathcal{F}[\tr_B(\omega)])=\Delta_{\alpha\beta}(\tr_B(\omega)\otimes\sigma)=\Delta_{\alpha\beta}(\tr_B(\omega))\qquad \forall\omega\in\mathcal{D(H}_A\otimes\mathcal{H}_B),\ \forall\alpha,\beta,
\end{equation}
which, again, would make $\mathcal{E}$ an FPO*.

Finally, let us prove $2\Rightarrow 3$. Let $\mathcal{E}:\mathcal{D(H}_A\otimes \mathcal{H}_B)\rightarrow\mathcal{D(H}_C\otimes\mathcal{H}_D)$ be a CPTP quantum channel for which any flat$_A$ state is mapped to a flat$_C$ state. Let us define new channels $\mathcal{F}^A_{\sigma}:\mathcal{D(H}_A)\rightarrow\mathcal{D(H}_C)$, $\mathcal{F}^B_{\rho}:\mathcal{D(H}_B)\rightarrow\mathcal{D(H}_C)$ for each $\sigma\in\mathcal{D(H}_B), \rho\in\mathcal{D(H}_A)$ so that
\begin{equation}
\mathcal{F}^A_\sigma(\rho)=\mathcal{F}^B_\rho(\sigma)=\tr_D\left[\mathcal{E}(\rho\otimes\sigma)\right]
\end{equation}
Clearly, these must be CPTP, since the original $\mathcal{E}$ is CPTP. Furthermore, if $\rho\in\mathcal{D(H}_A)$ is a flat state, then also $\mathcal{F}^A_\sigma(\rho)=\mathcal{F}^B_\rho(\sigma)$ must be a flat state for any $\sigma$ (since we assume that $\mathcal{E}$ maps flat$_A$ states to flat$_C$ states). Therefore, $\mathcal{F}^A_\sigma$ must be an FPO for every index $\sigma\in\mathcal{D(H}_B)$ and $\mathcal{F}^B_\rho$ must be an FPO for every index \emph{flat} $\rho\in\mathcal{D(H}_A)$. However, since for a flat $\rho$ the map $\mathcal{F}^B_\rho(\sigma)$ must be flat also for non-flat $\sigma$, according to Theorem \ref{Appendix th:FPO} it must be a constant map to a flat state. Therefore, we also know that for a flat $\rho$ and any $\sigma_1,\sigma_2\in\mathcal{D(H}_B)$
\begin{equation}
\mathcal{F}^A_{\sigma_1}(\rho)=\mathcal{F}^A_{\sigma_2}(\rho).
\end{equation}
In particular, this will hold for any pure state $\rho$ so it will also hold more generally for any state. Therefore, we can ignore the subindex $\sigma$ in the map $\mathcal{F}^A_{\sigma}$ to define $\mathcal{F}^A$ using $|0\rangle\langle 0|$:
\begin{equation}
\tr_D[\mathcal{E}(\rho\otimes\sigma)]=\mathcal{F}^A_{\sigma}(\rho)=\mathcal{F}^A(\rho)=\tr_D[\mathcal{E}(\rho\otimes|0\rangle\langle 0|)].
\end{equation}

Since separable states span the whole set of states (or the whole space of linear operators for that matter), the map $\tr_D[\mathcal{E}(\cdot)]:\mathcal{D(H}_A\otimes\mathcal{H}_B)\rightarrow\mathcal{D(H}_C)$ is also fully defined by its action on the separable states. Therefore, we can conclude that for any $\omega\in\mathcal{D(H}_A\otimes\mathcal{H}_B)$, we can write $\omega=\sum_i c_i \rho_i\otimes\sigma_i$ for some $c_i\in \mathbb{C}$, $\rho_i\in \mathcal{D(H}_A)$, $\sigma_i\in\mathcal{D(H}_B)$ and
\begin{equation}
\tr_D[\mathcal{E}(\omega)]=\tr_D\left[\mathcal{E}\left(\sum_i c_i \rho_i\otimes\sigma_i\right)\right]=\tr_D\left[\mathcal{E}\left(\sum_i c_i \rho_i\otimes|0\rangle\langle 0|\right)\right]=\tr_D[\mathcal{E}(\tr_B[\omega]\otimes|0\rangle\langle 0|)]=\mathcal{F}^A(\tr_B(\omega)),
\end{equation}
and we have already shown that $\mathcal{F}^A$ is an FPO.
\end{proof}

\end{document}